\documentclass[letterpaper, 10pt, conference]{ieeeconf}  
\IEEEoverridecommandlockouts 
\usepackage{graphicx, subcaption, xcolor, cite} 
\definecolor{NCSUred}{RGB}{153, 0, 0}
\definecolor{NCSUgreen}{RGB}{0, 132, 115}
\definecolor{NCSUblue}{RGB}{65, 86, 161}
\definecolor{NCSUorange}{RGB}{209, 73, 5}

\newcommand{\mc}[1]{\mathcal{#1}}

\newcommand{\mr}[1]{\mathrm{#1}}

\newcommand{\mbb}[1]{\mathbb{#1}}
\newcommand{\rg}[1]{\mathring{#1}}
\newcommand{\mR}{\mathbb{R}}

\newcommand{\mN}{\mathbb{N}}
\newcommand{\xD}[1]{\mr{d} #1}

\newcommand{\bra}[1]{\left( #1 \right)}
\newcommand{\Bra}[1]{\left[ #1 \right]}

\newcommand{\ip}[2]{\langle #1, \, #2 \rangle}
\newcommand{\vk}{\varkappa}

\newcommand{\vf}{\varphi}

\usepackage{amsmath, amssymb}
\usepackage{algorithm, algpseudocode}
\usepackage{enumerate}
\usepackage{hyperref}
\hypersetup{colorlinks=true, breaklinks=true, plainpages=false, citecolor=NCSUblue, linkcolor=NCSUblue, urlcolor=NCSUorange}
\newtheorem{theorem}{Theorem}
\newtheorem{definition}{Definition}
\newtheorem{remark}{Remark}

\newtheorem{assumption}{Assumption}
\newtheorem{lemma}{Lemma}
\newtheorem{corollary}{Corollary}

\title{\LARGE \bf Dissipativity Analysis of Nonlinear Systems: \\ A Linear--Radial Kernel-based Approach}
\author{Xiuzhen Ye, Wentao Tang
\thanks{This work was supported by NSF-CBET Award \#2414369 and Faculty Startup Fund from North Carolina State University.}
\thanks{The authors are with Department of Chemical and Biomolecular Engineering, North Carolina State University, U.S.A. Corresponding author: W. Tang ({\tt\small wtang23@ncsu.edu})}
}

\begin{document}
\maketitle\thispagestyle{empty}
\pagestyle{empty}
\begin{abstract}
Estimating the dissipativity of nonlinear systems from empirical data is useful for the analysis and control of nonlinear systems, especially when an accurate model is unavailable. 
Based on a Koopman operator model of the nonlinear system on a reproducing kernel Hilbert space (RKHS), the storage function and supply rate functions are expressed as \emph{kernel quadratic forms}, through which the dissipative inequality is expressed as a \emph{linear operator inequality}. 
The RKHS is specified by a \emph{linear--radial kernel}, which inherently encode the information of equilibrium point, thus ensuring that all functions in the RKHS are ``locally at least linear'' around the origin and that kernel quadratic forms are ``locally at least quadratic'', which expressively generalize conventional quadratic forms (including sum-of-squares polynomials). 
Based on the kernel matrices of the sampled data, the dissipativity estimation can be posed as a finite-dimensional convex optimization problem, and a statistical learning bound can be derived on the kernel quadratic form for the probabilistic approximate correctness of dissipativity estimation. 
\end{abstract}

%%%%%%%%%%%%%%%%%%%%%%%%%%%%%%%%%%%%%%%%%%%%%%%%%%%%%%%%%%%%%%%%%%%%%%%%%%%%%%%%
\section{INTRODUCTION}
Dissipativity \cite{lozano2013dissipative}, defined as the property that the change in a state-dependent storage function does not exceed a supply rate dependent on inputs and outputs, is an important aspect of nonlinear systems behaviors \cite{willems1997introduction}. 
Such a concept is naturally connected to stability concepts \cite{seiler2014stability, scherer2022dissipativity}, and gives a compositional input--output approach to analyze interconnected systems \cite{willems2002interconnections, arcak2016networks}. 
Naturally, since engineering systems are commonly governed by energy conservation and dissipation (as two fundamental laws of thermodynamics), the existence of dissipative properties can be found in robotic, electric, and chemical systems \cite{alonso1996process, bao2007process, van2014port}. 

\par The knowledge of dissipative property can be acquired via \emph{first-principles, model-based, or data-driven} approaches. 
The first-principles calculation of storage function and supply rate is useful when the nonlinear system is mechanistically interpretable, with analytical expressions for energy-like functions, or decomposable into such subsystems \cite[Ch. 6]{khalil2002nonlinear}. 
As for the model-based approach, while the dissipativity condition of nonlinear systems (generalized Kalman--Yakubovich--Popov Lemma) is provided as functional inequalities \cite{hill1976stability}, its applicability is usually restricted to linear cases, where the storage and supply rate are quadratic functions \cite{willems1972dissipative1, willems1972dissipative2, willems1998quadratic}, or polynomial systems \cite{zakeri2016local}. 
If one considers, for example, a chemical reactor, then the storage function and supply rate involve Gibbs free energy or entropy of multi-component chemical mixtures \cite{ramirez2013irreversible, garcia2016stability}, and such thermodynamic relations tend to be not only rough but also complicated. 
These limitations motivate data-driven estimation of dissipativity. 

\par Dissipativity estimation can be posed as a learning problem of finding the ``dissipativity parameters''\footnote{These refer to the parameters that define the storage and supply rate functions, such as the matrices in the quadratic forms or the coefficient $\beta$ in a finite-gain supply $s(y, u) = \beta\|u\|^2-\|y\|^2$).} that satisfy a linear constraint with all possible snapshots/trajectories of the system \cite{wahlberg2010non, romer2017determining, tang2019input}. 
While for linear systems, the learning is sample-efficient and one single input--output trajectory can be used \cite{berberich2020trajectory, koch2021provably, verhoek2024data}, the learning with nonlinear systems is perplexed by the need for complicated statistical methods \cite{tang2019dissipativity}, lack of behavioral theory \cite{tang2021dissipativity}, and suboptimality in trajectory sampling \cite{welikala2022line, tang2023dissipativity}, and as a result, struggles to yield an approximately correct dissipativity property \cite{locicero2024issues}. 
In Martin and Allg{\"{o}}wer \cite{martin2023data}, a relatively rigorous polynomial approximation approach was adopted, where the Taylor remainder is assumed to reside in a sector and accounted for in the estimation; on the other hand, the conservativeness of the sector bound or the adequacy of the choice of polynomial basis can be subtle. 

\par In the present paper, motivated by the wide use of Koopman operators to ``lift'' nonlinear dynamics into an infinite-dimensional spaces \cite{brunton2022modern}, we study the problems of \emph{representing the dissipative inequality} in an operator form and \emph{estimating the storage function from data} as a dissipativity certificate. 
The contributions are as follows.
\begin{enumerate}
    \item By expressing the operator model of the nonlinear system on a suitable reproducing kernel Hilbert space (RKHS), the storage and supply rate functions are considered as ``\emph{kernel quadratic forms}'' described by two self-adjoint Hilbert--Schmidt operators on the RKHS. 
    These two operators are required to satisfy a linear operator inequality with the model operators. 
    \item In order that the kernel quadratic forms are ``locally at least quadratic'' functions near the equilibrium point and hence generalize the quadratic functions, we adopt a \emph{linear--radial kernel}, which defines an RKHS spanned by \emph{component functions ($e_k: x\mapsto x_k$) multiplied by Sobolev-type functions}. 
    This construction, used in our recent works on Koopman spectrum--stability relations \cite{tang-ye2025koopman} and Lyapunov stability analysis \cite{tang-ye2026koopman} of autonomous systems, and is now used for systems with inputs. 
    \item When the supply rate is approximated as a finite-dimensional operator based on data, the estimation of storage function is then formulated as a \emph{linear matrix inequality} on data points. A statistical error bound is then derived on the violation of dissipativity, scaled by the distance or squared distance from the equilibrium point, implying the probabilistic approximate correctness of dissipativity estimation, in a pointwise sense. 
\end{enumerate}
The approach is generically applicable to nonlinear systems, as long as sufficient regularity is met by the dynamics, computationally tractable, and learns dissipativity \emph{globally} on the state--input region. 

\par The remaining paper is organized as follows. In \S\ref{sec:preliminaries}, preliminaries on dissipativity and Koopman operator modeling are provided. The formulation of dissipativity as a linear operator inequality is derived in \S\ref{sec:dissipativity}. The estimation of dissipativity from empirical data, along with statistical error bounds, are analyzed in \S\ref{sec:estimation}. Numerical studies are shown in \S\ref{sec:numerical}. Conclusions and discussions on its implications on data-driven Koopman operator-based control are found in \S\ref{sec:conclusions}.\footnote{
\emph{Notations:} We use lower-case letters to denote scalars, vectors, and functions valued in scalars or vectors, and upper-case letters for matrices and operators. 
Blackboard letters represent $\mN$ and $\mR$ in the usual sense or a region $\mbb{X}\subset \mR^d$. Calligraphic fonts such as $\mc{H}$ stand for function spaces. But conventional function/operator spaces also use regular upper-case letters, e.g., Sobolev--Hilbert function space $H^s(\mbb{X})$, space of bounded operators $B(\mc{G}, \mc{H})$, and space of Hilbert--Schmidt operators $B_2(\mc{H})$. 
Norm is denoted as $|\cdot|$ or $\|\cdot\|$, inner product -- $\langle \cdot, \cdot \rangle$, and outer product -- $\times$. By $g_1\times g_2$ we mean a rank-$1$ operator: $(g_1\times g_2)g_3 = \ip{g_2}{g_3}g_1$.  
Finally, $\mathbb{I}_n = \{0, 1, \dots, n-1\}$, $\mR_+=[0, \infty)$, and $a\vee(\wedge) b = \max(\min)\{a,b\}$. 
}

\section{PRELIMINARIES}\label{sec:preliminaries}
Consider a discrete-time nonlinear dynamical system:
\begin{equation}\label{eq:sys}
    x_{t+1} = f(x_t, u_t), \quad y_t = h(x_t, u_t),
\end{equation}
where $x_t \in \mbb{X} \subset \mR^{d_x}$ is the vector of states, $u_t \in \mbb{U} \subset \mR^{d_u}$ -- inputs, and $y_t \in \mR^{d_y}$ -- outputs. 
Without loss of generality, we let the origin be the equilibrium point of interest, i.e., $0\in \mbb{X}$, $0\in \mbb{U}$, $f(0,0) = 0$ and $h(0,0) = 0$. 

\subsection{Dissipativity}
\begin{definition}
The system \eqref{eq:sys} is said to be dissipative with respect to \emph{supply rate}\footnote{Here we consider $s$ to be dependent on $y$ and $u$. Literature (e.g., \cite{zanon2016periodic}) also defines dissipativity based on state- and input-dependent supply rate, in which case we may consider $h(x, u)=x$.} 
$s: \mR^{d_y}\times \mbb{U} \rightarrow\mR$ if there exists $v: \mbb{X} \to \mR_+$ with $v(0) = 0$, called \emph{storage function}, satisfying: 
\begin{equation} \label{eq:dissipativity}
    v(f(x,u)) - v(x) \le s(h(x,u), u), \enspace \forall (x,u) \in \mbb{X} \times \mbb{U}, 
\end{equation}
called the \emph{dissipative inequality}.\footnote{In particular, if $s(y,u) = y^\top Qy + 2y^\top Su + u^\top Ru$, then the system is $(Q,S,R)$-dissipative; if $d_y=d_u$ and $s(y,u) = y^\top u$, then the system is passive; if further $s(y,u) = y^\top u - \epsilon |y|^2$ (or $s(y,u) = y^\top u - \epsilon |u|^2$), then the system is output (input, respectively) strictly passive.} 
\end{definition}

The estimation of dissipativity of nonlinear systems provides useful information about their behaviors. For example, if the system is output strictly passive, i.e., $s(y,u) = y^\top u - \epsilon|y|^2 \leq \frac{1}{2\epsilon}|u|^2 - \frac{\epsilon}{2}|y|^2$, then %by summing the dissipative inequality from $t=0$ to $t=k-1$, one obtains % $V(x_k)-V(x_0)\leq \frac{1}{2\epsilon} \sum_{t=0}^{k-1}|u_t|^2 - \frac{\epsilon}{2} \sum_{t=0}^{k-1}|y_t|^2$, and hence 
% $\sum_{t=0}^{k-1}|y_k|^2\leq \frac{1}{\epsilon^2} \sum_{t=0}^{k-1}|u_t|^2 + \frac{2}{\epsilon}(v(x_0)-v(x_k))$, implying 
it has a finite $\ell^2$-gain of $1/\epsilon$. 
If the system is $(Q,S,R)$-dissipative and one can find an output-feedback gain $K\in \mR^{d_u\times d_y}$ such that $Q+SK+K^\top S^\top+K^\top RK \prec 0$, then $s(y,u)\leq 0$ leads to stabilization (if the system is detectable). 

For linear models $f(x,u)=Ax+Bu$ and $h(x,u)=Cx+Du$, the inequality \eqref{eq:dissipativity}, upon letting the storage function be a quadratic form $V(x)=x^\top Px$ (for a $P\succeq 0$), gives a linear matrix inequality (LMI) condition for $(Q,S,R)$-dissipativity: 
$$ \textstyle \begin{bmatrix}
    A^\top PA-P & PB \\ B^\top P & 0
\end{bmatrix} - \begin{bmatrix}
    C & D \\ 0 & I
\end{bmatrix}^\top \begin{bmatrix}
    Q & S \\ S^\top & R
\end{bmatrix}\begin{bmatrix}
    C & D \\ 0 & I
\end{bmatrix} \preceq 0. $$
While such a condition can be extended to nonlinear systems \cite{hill1976stability}, the verification would involve search for functions and become computationally intractable. 
A general perspective, as extensively used in the Koopman operator literature, is to rewrite the nonlinear dynamical and static relations as linear ones using \emph{operators}.

\subsection{Koopman Operator on Sobolev-Type RKHS}
When the inputs are absent, for an autonomous system $x_{k+1}=f(x_k)$, it is well-known that the Koopman operator is defined as the (linear) composition operator $K_f:g\mapsto g\circ f$ on any space of state-dependent functions that is invariant under composition. In the presence of inputs, we define the same composition mapping, although the domain and codomain are now different spaces.
\begin{definition}
    The \emph{Koopman operator} of system \eqref{eq:sys} is the mapping $K: \mc{G}\rightarrow K\mc{G}$, $g\mapsto g\circ f$, where $\mc{G}$ is any linear space of functions defined on $\mbb{X}$ and hence $K\mc{G}$ is a space of functions defined on $\mbb{X}\times \mbb{U}$. 
\end{definition}

In the same spirit as \cite{kohne2025error}, when the dynamics is known to be smooth to a certain order (even if $f$ is not explicitly known), then the Koopman operator can be defined on a Sobolev--Hilbert space with the matched smoothness index. This ensures a choice of $\mc{G}$ that comprises only of ``accordingly regular'' functions, despite still infinite-dimensional.
% \footnote{
%     In special cases where $f$ ``concentrates nonlinearity in a subsystem'' (for precise definitions, see Shang et al. \cite{shang2026existence}), $\mc{G}$ and $K\mc{G}$ can be chosen as finite-dimensional spaces spanned by some dictionary functions. 
%     For such systems, by defining an inner product on the dictionary, we have a finite-dimensional RKHS formulation. However, the assumption of such a finite-rank Koopman operator is restrictive, and it is doubted whether such prior knowledge is available when the model of $f$ is unknown. 
% }
\begin{lemma}
\label{lem:Koopman.Sobolev}
    Suppose that (i) $\mbb{X}\subset \mR^{d_x}$ has a Lipschitz boundary, and $\mbb{U}\subset \mR^{d_u}$ is bounded, (ii) $f\in C_\mr{b}^s(\mbb{X}\times \mbb{U}, \mbb{X})$ for some $s\in \mbb{N}$, and (iii) $\inf_{x\in \mbb{X}, u\in \mbb{U}} \lvert \mr{D}_xf(x, u) \rvert > 0$. Then $K\in B(H^s(\mbb{X}), H^s(\mbb{X}\times \mbb{U}))$. 
\end{lemma}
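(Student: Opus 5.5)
We prove the bound $\|g\circ f\|_{H^s(\mbb{X}\times\mbb{U})}\le C\|g\|_{H^s(\mbb{X})}$ by a density argument combined with the chain rule. By assumption (i) the domain $\mbb{X}$ is Lipschitz, so $C^\infty(\overline{\mbb{X}})$ (equivalently, restrictions of $C_c^\infty(\mR^{d_x})$ functions) is dense in $H^s(\mbb{X})$. It therefore suffices to prove the estimate for smooth $g$, with a constant $C$ depending only on $s$, $\|f\|_{C^s_\mr{b}}$, $|\mbb{U}|$ and the lower bound in (iii). The operator then extends uniquely by continuity. Since convergence in $H^s$ implies a.e. convergence along a subsequence, and $f$ does not map sets of positive measure to null sets (shown below), the extension coincides with the composition $g\mapsto g\circ f$.

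For smooth $g$ and a multi-index $\alpha$ in the variables $(x,u)$ with $|\alpha|\le s$, apply the Faà di Bruno formula. It writes $\partial^\alpha(g\circ f)$ as a finite sum of terms $(\partial^\beta g)\circ f$ with $1\le|\beta|\le|\alpha|$ (plus $g\circ f$ itself when $\alpha=0$). Each term is multiplied by a product of derivatives of components of $f$ of orders at most $|\alpha|\le s$. By (ii) these products are bounded uniformly on $\mbb{X}\times\mbb{U}$. Hence
\[
\|\partial^\alpha(g\circ f)\|_{L^2(\mbb{X}\times\mbb{U})}\le C_f\sum_{|\beta|\le|\alpha|}\|(\partial^\beta g)\circ f\|_{L^2(\mbb{X}\times\mbb{U})},
\]
and everything reduces to the $L^2$ bound $\|\phi\circ f\|_{L^2(\mbb{X}\times\mbb{U})}\le C\|\phi\|_{L^2(\mbb{X})}$ for $\phi\in L^2(\mbb{X})$.

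This $L^2$ composition bound is the main obstacle, and it is where (i) and (iii) enter. The plan is to use Fubini and write
\[
\int_{\mbb{U}}\int_{\mbb{X}}|\phi(f(x,u))|^2\,\xD{x}\,\xD{u}.
\]
For fixed $u$, the map $f_u=f(\cdot,u)$ has Jacobian $\mr{D}_xf$ whose determinant, interpreting (iii) as a uniform lower bound $|\det \mr{D}_xf|\ge c>0$, is bounded away from zero. The area formula then gives
\[
\int_{\mbb{X}}|\phi(f_u(x))|^2\,\xD{x}\le \frac{1}{c}\int_{\mbb{X}}|\phi(z)|^2\,N_u(z)\,\xD{z},
\]
where $N_u(z)=\#f_u^{-1}(z)$ is the multiplicity of preimages. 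The delicate point is bounding $N_u$ uniformly in $u$. Global injectivity is not assumed, so I would argue as follows. By the inverse function theorem, with uniform constants coming from the $C^1_\mr{b}$ bound on $f$ and the lower bound $c$, $f_u$ is injective on balls of a fixed radius $r>0$. If $\mbb{X}$ is bounded, which I would add as needed, or otherwise invoke implicitly through the Lipschitz-domain hypothesis, it is covered by finitely many such balls, giving $N_u\le M$ independent of $u$. Integrating over the bounded set $\mbb{U}$ then yields
\[
\|\phi\circ f\|_{L^2}^2\le |\mbb{U}|\,\frac{M}{c}\,\|\phi\|_{L^2}^2.
\]
The same nondegeneracy shows that preimages of null sets are null, which justifies the extension step in the density argument.

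Combining the three steps and summing over $|\alpha|\le s$ gives $\|Kg\|_{H^s(\mbb{X}\times\mbb{U})}\le C\|g\|_{H^s(\mbb{X})}$, which proves $K\in B(H^s(\mbb{X}),H^s(\mbb{X}\times\mbb{U}))$.
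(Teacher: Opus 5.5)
Your proposal follows the same route as the paper's proof (chain rule/Fa\`a di Bruno, uniform bounds on the derivatives of $f$ from (ii), then a change of variables using (iii)), and it is more careful. The paper's change-of-variables step never addresses the non-injectivity of $f(\cdot,u)$, which your area-formula and multiplicity argument handles. Also, the boundedness of $\mbb{X}$ you add is needed and does not follow from a Lipschitz boundary: on the strip $\mbb{X}=\mR\times(0,1)$, the $u$-independent map $f(x,u)=(0,\tfrac12)+\tfrac15(1+x_2)(\cos 5x_1,\sin 5x_1)$ is a $C^\infty_\mr{b}$ self-map with $|\det\mr{D}_xf|\geq\tfrac15$, yet $g\circ f\equiv 1\notin L^2(\mbb{X}\times\mbb{U})$ for any $g\in C_c^\infty(\mbb{X})$ equal to $1$ on the annulus $f(\mbb{X})$.

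The one slip is the injectivity radius. A radius uniform in $(x,u)$ requires a uniform modulus of continuity of $\mr{D}_xf$, e.g.\ from the $C^2_\mr{b}$ bound, which is available since $s\geq 2$ in the paper's regime $s>(d_x+d_u)/2$. The sup bound in $C^1_\mr{b}$ together with $|\det\mr{D}_xf|\geq c$ is not enough.
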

\begin{proof}
    For an arbitrary $g\in H^s(\mbb{X})$, we shall verify that $g\circ f\in H^s(\mbb{X}\times \mbb{U})$. Indeed, by the definition of $H^s$ spaces, 
    $$ \textstyle
        \|g\circ f\|_{H^s}^2 % &= \sum_{|\alpha|\leq s} \int_{\mbb{U}}\Bra{\int_{\mbb{X}} |\partial^\alpha (g \circ f)(x,u)|^2 \xD{x}} \xD{u} \\
        \leq c\sup_{u\in\mbb{U}} \sup_{|\alpha|\leq s} \int_{\mbb{X}} |\partial^\alpha (g \circ f)(x,u)|^2 \xD{x},$$
    in which we use $c$ to denote some positive positive constant. Due to the chain rule, for each multi-index $\alpha$, $\partial^\alpha (g \circ f)$ can be expressed as the sum of a finite number of terms, each being a product of one $\partial^\beta g$ (for some multi-index $|\beta|\leq |\alpha|$) and some $\partial^\gamma f$ (for some multi-index $|\gamma|\leq |\alpha|$). With the assumption that $f\in C_\mr{b}^s$, the squared integral of each of these terms are bounded by $\sup_{|\alpha|\leq s} \int_{\mbb{X}} |\partial^\alpha g(f(x))|^2 \xD{x}$. Then due to condition (iii), the change of variables gives a bound by $\int_{\mbb{X}} |\partial^\alpha g(x)|^2 \xD{x}$, which is linearly bounded by $\|g\|_{H^s}^2$. 
\end{proof}

\par By the Sobolev embedding theorem, if $s > d_x/2$, then $H^s(\mbb{X})$ is embedded continuously into the space of continuous functions $C^0(\mbb{X})$, and moreover, is a reproducing kernel Hilbert space (RKHS) \cite{wendland2004scattered}. 
\begin{definition}
    A RKHS with kernel $\kappa$, which is a symmetric bivariate function $\kappa\in C(\mbb{X}\times \mbb{X},\mR)$, refers to the space 
    $$\mc{H}_\kappa(\mbb{X}) = \overline{\mr{span}}\{\kappa(x,\cdot): x\in \mbb{X}\}.$$ 
    The space is a Hilbert space if for any $\{x^{(i)}\}_{i=1}^n \subset \mbb{X}$, the \emph{kernel matrix} $G = \Bra{ \kappa\bra{x^{(i)}, x^{(j)}} }_{i,j=1}^n \succeq 0$, and thus accommodate an inner product $\ip{\kappa(x,\cdot)}{\kappa(x',\cdot)} = \kappa(x,x')$ and an induced norm: $\|\kappa(x, \cdot)\| = \kappa(x,x)^{1/2}$. 
\end{definition} 
\begin{lemma}[Wendland \cite{wendland2004scattered}]
\label{lem:Koopman.RKHS}
    Let $\kappa$ be a specified by $\kappa(x, x') = \rho(|x-x'|)$ for some $\rho: \mR_+\rightarrow \mR_+$, whose Fourier transform $\hat{\rho}(\xi)$ is such that $c_1(1+|\xi|^2)^{-s}\leq |\hat{\rho}(\xi)| \leq c_2(1+|\xi|^2)^{-s}$ for some constants $c_2\geq c_1>0$ for all $\xi\in \mR$. If $\mbb{X} \subset \mR^{d_x}$ has a Lipschitz boundary, then $H^{s}(\mbb{X})$ coincides with $\mc{H}_\kappa(\mbb{X})$, with equivalent norms. 
\end{lemma}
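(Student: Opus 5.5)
The plan is to prove the statement through the Fourier characterization of both spaces, working first on all of $\mR^{d_x}$ and then passing to $\mbb{X}$ with the Lipschitz-boundary extension theorem. Throughout, the condition on $\hat\rho$ is read with $\xi\in\mR^{d_x}$: the Fourier transform is that of $\Phi(x)=\rho(|x|)$ as a function on $\mR^{d_x}$.

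\emph{Step 1 (whole space).} Since $\hat\Phi>0$ and $\hat\Phi\in L^1$ (because $s>d_x/2$), Bochner's theorem shows that $\kappa(x,x')=\Phi(x-x')$ is positive definite. Its native space on $\mR^{d_x}$ is then characterized by
$$ \textstyle \mc{H}_\kappa(\mR^{d_x})=\{g\in L^2\cap C:\ \hat g/\sqrt{\hat\Phi}\in L^2\}, \quad \|g\|_{\mc{H}_\kappa}^2=(2\pi)^{-d_x/2}\int |\hat g(\xi)|^2/\hat\Phi(\xi)\,\xD{\xi}.$$
To justify this, I will first check on finite combinations $g=\sum_i a_i\kappa(x^{(i)},\cdot)$ that the RKHS norm $\sum_{i,j}a_ia_j\kappa(x^{(i)},x^{(j)})$ equals this Fourier integral, using $\hat g(\xi)=\hat\Phi(\xi)\sum_i a_ie^{-i\xi\cdot x^{(i)}}$. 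I will then take closures: finite combinations are dense in the Fourier-defined space, since any function orthogonal to all translates of $\Phi$ vanishes. The two-sided bound $c_1(1+|\xi|^2)^{-s}\le\hat\Phi\le c_2(1+|\xi|^2)^{-s}$ then gives
$$ c_2^{-1}\|g\|_{H^s}^2\lesssim\|g\|_{\mc{H}_\kappa}^2\lesssim c_1^{-1}\|g\|_{H^s}^2, $$
because $\|g\|_{H^s}^2\simeq\int(1+|\xi|^2)^s|\hat g|^2\,\xD{\xi}$. Hence $\mc{H}_\kappa(\mR^{d_x})=H^s(\mR^{d_x})$ with equivalent norms.

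\emph{Step 2 (restriction to $\mbb{X}$).} On $\mbb{X}$, the RKHS of the restricted kernel is the restriction space, with norm $\|g\|_{\mc{H}_\kappa(\mbb{X})}=\min\{\|\tilde g\|_{\mc{H}_\kappa(\mR^{d_x})}:\tilde g|_{\mbb{X}}=g\}$; this is the standard Aronszajn restriction theorem. Similarly, $H^s(\mbb{X})$ always embeds into the restriction of $H^s(\mR^{d_x})$ with the quotient norm controlled. The key additional ingredient is Stein's bounded extension operator $E:H^s(\mbb{X})\to H^s(\mR^{d_x})$, which exists because $\mbb{X}$ has a Lipschitz boundary. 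Combining the two directions:
\begin{itemize}
\item Given $g\in H^s(\mbb{X})$, we have $\|g\|_{\mc{H}_\kappa(\mbb{X})}\le\|Eg\|_{\mc{H}_\kappa}\lesssim\|Eg\|_{H^s}\lesssim\|g\|_{H^s(\mbb{X})}$.
\item Conversely, for any extension $\tilde g$ of $g$, $\|g\|_{H^s(\mbb{X})}\le\|\tilde g\|_{H^s(\mR^{d_x})}\lesssim\|\tilde g\|_{\mc{H}_\kappa}$; minimizing over $\tilde g$ gives the reverse bound.
\end{itemize}
This yields the claimed coincidence of the spaces with equivalent norms.

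\emph{Main obstacle.} The delicate step is Step 1's identification of the closure of $\mr{span}\{\kappa(x,\cdot)\}$ with the Fourier-weighted space. I would handle it by defining the Fourier-weighted space, showing it is an RKHS whose reproducing kernel is exactly $\kappa$ via the inversion formula $g(x)=(2\pi)^{-d_x/2}\int\hat g(\xi)e^{i\xi\cdot x}\,\xD{\xi}$, and then invoking uniqueness of the RKHS associated with a given kernel. The extension operator for Lipschitz domains is taken as known.
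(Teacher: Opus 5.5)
Your proposal is correct and follows essentially the standard argument of the source the paper cites; the paper gives no proof of its own beyond the citation to Wendland. That argument is: the Fourier characterization of the native space of $\Phi=\rho(|\cdot|)$ on $\mR^{d_x}$ via uniqueness of the reproducing kernel, norm equivalence with $H^s(\mR^{d_x})$ from the two-sided bound on $\hat\Phi$, and then the restriction theorem for the kernel on $\mbb{X}$ combined with the Stein extension operator for Lipschitz domains. One small slip: $H^s(\mbb{X})$ does not \emph{always} embed into the restriction space of $H^s(\mR^{d_x})$ (this fails, e.g., on a slit disk), since only the reverse inclusion holds for every domain; the direction you actually need is exactly what the extension operator supplies, and that is how your two bullet points correctly use it.
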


Clearly, for $\mbb{U}\subset \mR^{d_u}$ that is also a region with Lipschitz boundary, the same can be said about $H^s(\mbb{X}\times\mbb{U}) = \mc{H}_\varkappa(\mbb{X}\times \mbb{U})$, where $\varkappa$ is the radial kernel: $\vk((x,u), (x',u'))=\rho(|(x,u)-(x',u')|)$, if further $s>(d_x+d_u)/2$. 
\begin{corollary}\label{cor:Koopman.Sobolev}
    Suppose that (i) $\mbb{X}\subset \mR^{d_x}$ and $\mbb{U}\subset \mR^{d_u}$ are bounded regions with Lipschitz boundaries, (ii) $f\in C^s(\mbb{X}\times \mbb{U}, \mbb{X})$ for some $s\in \mbb{N}$ with $s>(d_x+d_u)/2$, and (iii) $\inf_{x\in \mbb{X}, u\in \mbb{U}} \lvert \mr{D}_xf(x, u) \rvert > 0$. 
    Let $\kappa(x,x')=\rho(|x-x'|)$ and $\vk((x,u),(x',u')) = \rho(|(x,u)-(x',u')|)$ with $\rho:\mR_+\rightarrow\mR_+$, whose Fourier transform $\hat{\rho}(\xi)$ is such that $c_1(1+|\xi|^2)^{-s}\leq |\hat{\rho}(\xi)| \leq c_2(1+|\xi|^2)^{-s}$ for some constants $c_2\geq c_1>0$.
    Then $K\in B(\mc{H}_\kappa(\mbb{X}), \mc{H}_\vk(\mbb{X}\times \mbb{U}))$.\footnote{In implementation, when defining the kernel, we can use a weighted Euclidean distance with different weights on all components of $x-x'$ and $(x,u)-(x',u')$. Clearly the RKHS such defined remains equivalent.} 
\end{corollary}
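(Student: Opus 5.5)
The corollary follows by sandwiching the Koopman operator between two norm-equivalence isomorphisms supplied by Lemma~\ref{lem:Koopman.RKHS}, with Lemma~\ref{lem:Koopman.Sobolev} doing the analytic work in between. Concretely, I would write
$$K = J_{\vk}\circ K_{\mr{Sob}} \circ J_\kappa^{-1},$$
where
\begin{itemize}
\item $J_\kappa:\mc{H}_\kappa(\mbb{X})\to H^s(\mbb{X})$ and $J_\vk:H^s(\mbb{X}\times\mbb{U})\to\mc{H}_\vk(\mbb{X}\times\mbb{U})$ are the identity maps on the common function sets;
\item $K_{\mr{Sob}}$ is the composition operator $g\mapsto g\circ f$ regarded as a map $H^s(\mbb{X})\to H^s(\mbb{X}\times\mbb{U})$.
\end{itemize}
Boundedness of $K$ then follows because a composition of three bounded linear maps is bounded.

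\textbf{Step 1: the outer identifications.} I apply Lemma~\ref{lem:Koopman.RKHS} twice.
\begin{itemize}
\item On $\mbb{X}\subset\mR^{d_x}$ with kernel $\kappa$, it gives $\mc{H}_\kappa(\mbb{X})=H^s(\mbb{X})$ with equivalent norms.
\item On the product region $\mbb{X}\times\mbb{U}\subset\mR^{d_x+d_u}$ with kernel $\vk$, it gives $\mc{H}_\vk(\mbb{X}\times\mbb{U})=H^s(\mbb{X}\times\mbb{U})$ with equivalent norms.
\end{itemize}
For the second application I must check that a product of two bounded Lipschitz domains is again a bounded Lipschitz domain. This is standard: locally, the boundary of the product is a graph or, near corners, a Lipschitz polyhedral-type set.

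The Fourier decay hypothesis is stated for $\hat\rho$ in the relevant dimension. I read it as the $d$-dimensional transform of $\rho(|\cdot|)$ with $d=d_x$ and $d=d_x+d_u$ respectively. Strictly, the transforms in the two dimensions differ, so either the assumption is understood as holding in both dimensions or the Sobolev index is adjusted. The condition $s>(d_x+d_u)/2$, which also implies $s>d_x/2$, makes both spaces genuine RKHSs of continuous functions. Hence the identifications are legitimate, and $J_\kappa$, $J_\vk$ are bounded with bounded inverses.

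\textbf{Step 2: the middle factor.} This is exactly Lemma~\ref{lem:Koopman.Sobolev}. Its hypotheses hold as follows:
\begin{itemize}
\item Lipschitz boundaries and boundedness of $\mbb{U}$ are given.
\item $f\in C^s$ on the bounded closure upgrades to $f\in C^s_\mr{b}$. I will assume $f$ extends with bounded derivatives, i.e.\ read $C^s$ on $\overline{\mbb{X}\times\mbb{U}}$.
\item The nondegeneracy of $\mr{D}_xf$ is assumed directly.
\end{itemize}
So $K_{\mr{Sob}}\in B(H^s(\mbb{X}),H^s(\mbb{X}\times\mbb{U}))$.

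\textbf{Main obstacle.} The only delicate point is bookkeeping in Step 1: making sure the Fourier condition and the Lipschitz-boundary condition transfer to the product domain $\mbb{X}\times\mbb{U}$ and to dimension $d_x+d_u$, so that Lemma~\ref{lem:Koopman.RKHS} really applies to $\vk$. Once that is settled, the claim is the one-line composition above.
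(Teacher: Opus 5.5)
Your factorization $K=J_\vk\circ K_{\mr{Sob}}\circ J_\kappa$ is the argument the paper intends. The paper gives no separate proof: it relies on the remark before the corollary that Lemma~\ref{lem:Koopman.RKHS} applies on both domains, combined with Lemma~\ref{lem:Koopman.Sobolev}. (Minor: since you defined $J_\kappa:\mc{H}_\kappa(\mbb{X})\to H^s(\mbb{X})$, the right-hand factor is $J_\kappa$, not $J_\kappa^{-1}$.) However, the point you call the ``main obstacle'' is a genuine gap rather than bookkeeping, and neither of your two resolutions closes it.

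Write $\Phi_D=\rho(|\cdot|)$ on $\mR^D$ with $D=d_x+d_u$. Up to a constant, the $d_x$-dimensional Fourier transform of $\rho(|\cdot|)$ is the marginal $\int_{\mR^{d_u}}\hat\Phi_D(\omega,\eta)\,\xD{\eta}$. Suppose $\hat\Phi_D(\xi)\asymp(1+|\xi|^2)^{-\sigma}$, which Step~1 needs for $\vk$. The substitution $\eta=(1+|\omega|^2)^{1/2}\zeta$ then shows the marginal is $\asymp(1+|\omega|^2)^{-(\sigma-d_u/2)}$. Two consequences follow.
\begin{enumerate}
\item The Fourier condition can never hold with the same $s$ in both dimensions, so your first resolution is vacuous.
\item One always gets $\mc{H}_\vk(\mbb{X}\times\mbb{U})=H^{\sigma}(\mbb{X}\times\mbb{U})$ but $\mc{H}_\kappa(\mbb{X})=H^{\sigma-d_u/2}(\mbb{X})$; this is the trace theorem in kernel form.
\end{enumerate}
``Adjusting the index'' then breaks the conclusion. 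Take $f(x,u)=x$, which satisfies (i)--(iii). Then $(Kg)(x,u)=g(x)$ lies in $H^{\sigma}(\mbb{X}\times\mbb{U})$ only if $g\in H^{\sigma}(\mbb{X})$. So $K$ does not even map $H^{\sigma-d_u/2}(\mbb{X})$ into $\mc{H}_\vk$. For instance, with $d_x=d_u=1$, $s=2$ and the Mat\'ern profile $\rho(r)=rK_1(r)$, you get $\mc{H}_\kappa=H^{3/2}$ and $\mc{H}_\vk=H^2$.

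Hence, for a single profile $\rho$ and $d_u\ge 1$, the corollary as literally stated is false. The paper's ``Clearly \dots\ the same can be said'' makes the same slip.

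The missing idea is to decouple the two kernels. Use $\kappa(x,x')=\rho_x(|x-x'|)$ and $\vk((x,u),(x',u'))=\rho_{xu}(|(x,u)-(x',u')|)$, where the $d_x$-dimensional transform of $\rho_x(|\cdot|)$ and the $(d_x+d_u)$-dimensional transform of $\rho_{xu}(|\cdot|)$ are each $\asymp(1+|\xi|^2)^{-s}$. Mat\'ern kernels of smoothness $s-d_x/2$ and $s-(d_x+d_u)/2$ are an example. Under that hypothesis your Steps~1 and~2 go through as written.

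Your other checks are correct. A product of bounded Lipschitz domains is again Lipschitz: near a corner it is the epigraph, in the diagonal direction, of a maximum of two Lipschitz functions. And reading $C^s$ up to the boundary does supply the $C^s_\mr{b}$ hypothesis of Lemma~\ref{lem:Koopman.Sobolev}.
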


On the RKHS, the reproducing property hold: $\ip{g}{\kappa(x, \cdot)} = g(x)$, $\forall g\in \mc{H}_\kappa(\mbb{X}), \forall x\in \mbb{X}$. We thus view $\kappa(x,\cdot)=:\phi_x \in \mc{H}_\kappa(\mbb{X})$ as the lifted representation, or \emph{canonical feature}, of point $x$ in the infinite-dimensional RKHS. Similarly we denote $\vk((x,u),(\cdot,\cdot))=:\vf_{(x,u)}$. 
With the canonical features, the Koopman operator $K$ can be well characterized by its \emph{adjoint operator}, namely the operator $K^\ast: \mc{H}_\vk(\mbb{X}\times \mbb{U}) \rightarrow \mc{H}_\kappa(\mbb{X})$ such that for all $g_1\in \mc{H}_\vk(\mbb{X}\times \mbb{U})$ and $g_2\in \mc{H}_\kappa(\mbb{X})$, we have $\ip{K^*g_1}{g_2}=\ip{g_1}{Kg_2}$. 
\begin{corollary}\label{cor:push.forward}
    Under the conditions in Corollary \ref{cor:Koopman.Sobolev}, $K^\ast \in B(\mc{H}_\vk(\mbb{X}\times \mbb{U}), \mc{H}_\kappa(\mbb{X}))$ satisfies
    $$K^* \vf_{(x,u)} = \phi_{x^+}, \enspace \forall (x,u)\in \mbb{X}\times \mbb{U}, $$
    in which $x^+:=f(x,u)$. 
\end{corollary}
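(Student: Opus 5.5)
The proof has two parts: boundedness of $K^\ast$, and its action on canonical features, which follows from the reproducing property.

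\emph{Boundedness.} By Corollary \ref{cor:Koopman.Sobolev}, $K\in B(\mc{H}_\kappa(\mbb{X}), \mc{H}_\vk(\mbb{X}\times \mbb{U}))$, and both spaces are Hilbert spaces. For a fixed $g_1\in \mc{H}_\vk(\mbb{X}\times\mbb{U})$, the map $g_2\mapsto \ip{g_1}{Kg_2}$ is a bounded linear functional on $\mc{H}_\kappa(\mbb{X})$, since
$$|\ip{g_1}{Kg_2}|\leq \|K\|\|g_1\|\|g_2\|.$$
The Riesz representation theorem then gives a unique $K^\ast g_1\in\mc{H}_\kappa(\mbb{X})$ with $\ip{K^\ast g_1}{g_2}=\ip{g_1}{Kg_2}$ for all $g_2$. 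Linearity of $g_1\mapsto K^\ast g_1$ follows from uniqueness. Boundedness follows from $\|K^\ast g_1\|=\sup_{\|g_2\|\le1}|\ip{g_1}{Kg_2}|\le \|K\|\|g_1\|$, so $\|K^\ast\|\leq\|K\|$. This is the standard Hilbert-adjoint construction.

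\emph{Action on features.} Fix $(x,u)\in\mbb{X}\times\mbb{U}$ and set $x^+=f(x,u)\in\mbb{X}$; this point lies in $\mbb{X}$ because $f$ maps into $\mbb{X}$. For any $g\in\mc{H}_\kappa(\mbb{X})$, $Kg=g\circ f$ belongs to $\mc{H}_\vk(\mbb{X}\times\mbb{U})$ by Corollary \ref{cor:Koopman.Sobolev}. Applying the reproducing property in $\mc{H}_\vk$ and then in $\mc{H}_\kappa$ gives
$$\ip{K^\ast\vf_{(x,u)}}{g}=\ip{\vf_{(x,u)}}{Kg}=(g\circ f)(x,u)=g(x^+)=\ip{\phi_{x^+}}{g}.$$
Hence $K^\ast\vf_{(x,u)}-\phi_{x^+}$ is orthogonal to every $g\in\mc{H}_\kappa(\mbb{X})$. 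In particular it is orthogonal to itself, so it vanishes, which proves the identity.

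\emph{Main obstacle.} The only delicate point is that the reproducing property in $\mc{H}_\vk$ must be applied to $Kg$. This is legitimate only because $Kg$ genuinely lies in $\mc{H}_\vk(\mbb{X}\times\mbb{U})$ as a continuous function that agrees pointwise with $g\circ f$. That in turn rests on the norm equivalence of Lemma \ref{lem:Koopman.RKHS}, which identifies the Sobolev space with the RKHS, together with Lemma \ref{lem:Koopman.Sobolev}. Pointwise evaluation is meaningful because $s>(d_x+d_u)/2$ gives the continuous embedding into $C^0$. Once this membership is granted, both steps above are routine.
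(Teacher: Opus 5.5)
Your proof is correct and is the argument the paper relies on. The paper gives no separate proof and treats the statement as immediate from two facts: the boundedness of $K$ in Corollary~\ref{cor:Koopman.Sobolev}, which makes the Hilbert adjoint exist and be bounded, and the reproducing-property computation $\ip{K^\ast\vf_{(x,u)}}{g}=(g\circ f)(x,u)=\ip{\phi_{x^+}}{g}$. Your additional point is also correct: the norm equivalence of Lemma~\ref{lem:Koopman.RKHS} is only needed to transfer boundedness from the Sobolev norms to the RKHS norms, while the adjoint itself is taken with respect to the RKHS inner products.
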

% \par Such a relation will be used for ``lifting'' the dissipative inequality \eqref{eq:dissipativity} onto the RKHS, to be discussed in \S\ref{sec:dissipativity}, as well as learning the Koopman operator from data in \S\ref{sec:estimation}. 

\subsection{Linear--Radial Kernel and its RKHS}
\par The Sobolev-type RKHS corresponds to a translation-invariant, i.e., radial, kernel function $\kappa$. The definition of Koopman operator on such an RKHS is irrelevant to the equilibrium point at the origin. As a result, the functions contained in this RKHS must be too broad. 
In particular, the storage function $V(\cdot)$ of interest should ``look like'' a quadratic or a higher-order polynomial near the origin, which must then appear as a quadratic form of linear or higher-order functions near the origin. 
Such an intuition motivates us to restrict the function space $\mc{G}$ to contain only \emph{``locally at least linear''} functions. 

\par Let us define 
$$\textstyle \rg{H}^s(\mbb{X}) = \left\{ \sum_{k=1}^{d_x} e_kg_k: g_k \in H^s(\mbb{X})\right\}, $$
$$\textstyle \rg{C}^s(\mbb{X}\times\mbb{U}, \mbb{X}) = \left\{ \sum_{k=1}^{d_x+d_u} e_kf_k: f_k \in H^s(\mbb{X}\times\mbb{U}, \mbb{X})\right\}, $$
where we denote function $e_k(x)=x_k$ (i.e., projection onto the $k$-th component) and denote $e_kh_k$ as the pointwise function product (namely $e_kh_k:x\mapsto e_k(x)h_k(x)$). 
Thus, $\rg{H}^s(\mbb{X})$ contains state-dependent functions that ``are Sobolev and vanish at $0$'', and $\rg{C}^s(\mbb{X}\times\mbb{U}, \mbb{X})$ is the class of dynamics that ``are smooth enough and keep $0$ an equilibrium point''. % Indeed, when $s>(d_x+d_u)/2$, the Sobolev--Hilbert functions are continuous, and hence any function in $\rg{H}^s(\mbb{X})$ or $\rg{H}^s(\mbb{X}\times \mbb{U})$ vanishes at the origin.  
In our earlier works \cite{tang-ye2025koopman, tang-ye2026koopman}, the corresponding notions for autonomous systems were proposed and used for stability analysis. This work hereby extends them to input-actuated systems with no essential change in the original proof. 

\begin{lemma}[\cite{tang-ye2025koopman}]
\label{lem:linear-radial}
    Suppose that $\mbb{X}\subset \mR^{d_x}$ has Lipschitz boundary and $s>d_x/2$. Then $\rg{H}^s(\mbb{X})$ is an RKHS with the linear--radial kernel 
    $\rg{\kappa}(x,x')= (x^\top x')\kappa(x,x')$, 
    where $\kappa$ is the radial kernel with $H^s(\mbb{X})=\mc{H}_\kappa(\mbb{X})$. 
    Hence, if in addition $\mbb{U}\subset \mR^{d_u}$ has Lipschitz boundary and $s>(d_x+d_u)/2$, then $\rg{H}^s(\mbb{X}\times \mbb{U})$ is an RKHS with linear--radial kernel 
    $\rg{\vk}((x,u),(x',u'))=(x^\top x'+u^\top u') \vk((x,u),(x',u'))$, 
    where $\vk$ is the kernel for $H^s(\mbb{X}\times \mbb{U})$ as an RKHS. 
\end{lemma}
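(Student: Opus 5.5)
The plan is to obtain the RKHS structure of $\rg{H}^s(\mbb{X})$ directly from the standard feature-map characterization of RKHSs. By Lemma \ref{lem:Koopman.RKHS}, $H^s(\mbb{X})=\mc{H}_\kappa(\mbb{X})$ with equivalent norms, so I replace $H^s$ by $\mc{H}_\kappa$ throughout. Consider the direct sum $\mc{H}_\kappa^{d_x}=\mc{H}_\kappa\oplus\cdots\oplus\mc{H}_\kappa$ and the linear map
$$\textstyle \Pi:\mc{H}_\kappa^{d_x}\to \{\text{functions on }\mbb{X}\},\qquad (g_1,\dots,g_{d_x})\mapsto \sum_{k=1}^{d_x} e_kg_k .$$
By definition $\rg{H}^s(\mbb{X})=\Pi(\mc{H}_\kappa^{d_x})$. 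Equip it with the quotient norm $\|h\|:=\inf\{\|(g_k)\|_{\mc{H}_\kappa^{d_x}}:\Pi(g_k)=h\}$. The kernel of $\Pi$ is closed: pointwise evaluation is continuous on $\mc{H}_\kappa$, and $e_k$ is bounded on bounded sets (on an unbounded $\mbb{X}$ pointwise evaluation still suffices). Hence the quotient $\mc{H}_\kappa^{d_x}/\ker\Pi\cong(\ker\Pi)^\perp$ is a Hilbert space, isometric to $\rg{H}^s(\mbb{X})$.

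Next I show point evaluations are bounded and identify the reproducing kernel. For $h=\Pi(g)$ with $g\in(\ker\Pi)^\perp$,
$$\textstyle h(x)=\sum_k x_k\ip{g_k}{\phi_x}=\ip{g}{\Phi_x}_{\mc{H}_\kappa^{d_x}},\qquad \Phi_x:=(x_1\phi_x,\dots,x_{d_x}\phi_x).$$
Thus $|h(x)|\le\|\Phi_x\|\,\|h\|$, so $\rg{H}^s(\mbb{X})$ is an RKHS. This is the standard result that the space $\{x\mapsto\ip{g}{\Phi_x}\}$ with the quotient norm is the RKHS of the kernel $\ip{\Phi_x}{\Phi_{x'}}$ (Aronszajn; Steinwart--Christmann, Thm.~4.21). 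Computing that kernel gives
$$\textstyle \ip{\Phi_x}{\Phi_{x'}}=\sum_k x_kx'_k\ip{\phi_x}{\phi_{x'}}=(x^\top x')\kappa(x,x')=\rg{\kappa}(x,x').$$
The kernel section $\rg{\kappa}(x,\cdot)=\Pi(\Phi_x)$ lies in $\rg{H}^s(\mbb{X})$, and the reproducing property follows from $\Phi_x\in(\ker\Pi)^\perp$, which holds because $\ip{g}{\Phi_x}=\Pi(g)(x)=0$ for $g\in\ker\Pi$. As a consequence, the closed span of $\{\rg{\kappa}(x,\cdot)\}$ is the whole space, since anything orthogonal to every $\Phi_x$ lies in $\ker\Pi$. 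This matches the paper's definition of $\mc{H}_{\rg{\kappa}}$.

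The input-actuated statement follows verbatim. Replace $\mbb{X}$ by $\mbb{X}\times\mbb{U}$ and $d_x$ by $d_x+d_u$, use $s>(d_x+d_u)/2$ so that $H^s(\mbb{X}\times\mbb{U})=\mc{H}_\vk$ by Lemma \ref{lem:Koopman.RKHS}, and use the components $e_k$ of $(x,u)$. The kernel becomes $\sum_k z_kz'_k\,\vk(z,z')=(x^\top x'+u^\top u')\vk((x,u),(x',u'))$.

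The main obstacle is a norm subtlety. The quotient norm is only equivalent to, not equal to, any "natural" norm on $\rg{H}^s$, because the decomposition $h=\sum e_kg_k$ is non-unique. The statement therefore holds as an identification of sets with equivalent norms, inherited through Lemma \ref{lem:Koopman.RKHS}. I would state explicitly that $\rg{H}^s(\mbb{X})$ carries this quotient norm, defined up to equivalence. The closedness of $\ker\Pi$ needs no Sobolev embedding beyond the continuity of evaluation, which $s>d_x/2$ guarantees.
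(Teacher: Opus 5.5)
Your proof is correct. The paper gives no argument for this lemma: it imports the result from \cite{tang-ye2025koopman} and only remarks that the input-actuated case needs ``no essential change''. Your feature-map construction therefore supplies a self-contained proof rather than reproducing one.

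In unpacked form, your argument is Aronszajn's product-kernel theorem applied to $\rg\kappa=\lambda\cdot\kappa$, where $\lambda(x,x')=x^\top x'$ is the linear kernel. Since $\mc{H}_\lambda$ is the $d_x$-dimensional space of linear functions with orthonormal basis $e_1,\dots,e_{d_x}$, one has $\mc{H}_\lambda\otimes\mc{H}_\kappa\cong\mc{H}_\kappa^{d_x}$, and restriction to the diagonal is exactly your map $\Pi$. An equivalent route writes $\rg\kappa(x,x')=\sum_k e_k(x)e_k(x')\kappa(x,x')$, notes that each summand is the kernel of $e_k\mc{H}_\kappa$, and applies the sum-of-kernels theorem.

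What your route buys is that it makes explicit what the paper's set-only definition of $\rg{H}^s(\mbb{X})$ leaves implicit. The decomposition $h=\sum_k e_kg_k$ is not unique, so the Hilbert structure must be the quotient (minimal-norm) one. It also shows that the hypotheses on $\mbb{X}$ and $s$ enter only through Lemma \ref{lem:Koopman.RKHS}, via the identification $H^s(\mbb{X})=\mc{H}_\kappa(\mbb{X})$.

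One small correction to your closing remark. The norm being fixed only up to equivalence comes from that identification: the $H^s$ and $\mc{H}_\kappa$ norms are merely equivalent. It does not come from non-uniqueness, which is instead why you need the quotient at all.

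For the second half, the only tacit step is that $\mbb{X}\times\mbb{U}$ again has Lipschitz boundary, so that Lemma \ref{lem:Koopman.RKHS} applies on the product. The paper takes this for granted as well.
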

\begin{corollary}\label{cor:Koopman.LinearSobolev}
    Suppose that (i) $\mbb{X}\subset \mR^{d_x}$ and $\mbb{U}\subset \mR^{d_u}$ are bounded regions with Lipschitz boundaries, (ii) $f\in \rg C^s(\mbb{X}\times \mbb{U}, \mbb{X})$ for some $s\in \mbb{N}$ with $s>(d_x+d_u)/2$, and (iii) $\inf_{x\in \mbb{X}, u\in \mbb{U}} \lvert \mr{D}_x f(x, u) \rvert > 0$. 
    Let 
    \begin{equation}\label{eq:linear--radial}
    \begin{aligned}
        \rg\kappa(x,x')&= (x^\top x')\rho(|x-x'|) \text{ and } \\ 
        \rg\vk((x,u),(x',u')) &= (x^\top x'+u^\top u') \rho(|(x,u)-(x',u')|)
    \end{aligned}
    \end{equation}
    with $\rho:\mR_+\rightarrow\mR_+$, whose Fourier transform $\hat{\rho}(\xi)$ is such that $c_1(1+|\xi|^2)^{-s}\leq |\hat{\rho}(\xi)| \leq c_2(1+|\xi|^2)^{-s}$ for some constants $c_2\geq c_1>0$. 
    Then $K\in B(\mc{H}_{\rg\kappa}(\mbb{X}), \mc{H}_{\rg\vk}(\mbb{X}\times \mbb{U}))$.
\end{corollary}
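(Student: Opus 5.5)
We plan to reduce Corollary \ref{cor:Koopman.LinearSobolev} to Corollary \ref{cor:Koopman.Sobolev} via the structure of the linear--radial spaces. By Lemma \ref{lem:linear-radial}, with $\kappa$ and $\vk$ the radial kernels built from the same $\rho$, we have $\mc{H}_{\rg\kappa}(\mbb{X})=\rg H^s(\mbb{X})$ and $\mc{H}_{\rg\vk}(\mbb{X}\times\mbb{U})=\rg H^s(\mbb{X}\times\mbb{U})$. The claim is therefore equivalent to $K$ mapping $\rg H^s(\mbb{X})$ boundedly into $\rg H^s(\mbb{X}\times\mbb{U})$.

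Take $g=\sum_{k=1}^{d_x} e_k g_k$ with $g_k\in H^s(\mbb{X})$. Then
$$ Kg=\sum_{k=1}^{d_x} (e_k\circ f)\,(g_k\circ f)=\sum_{k=1}^{d_x} f_k\,(Kg_k), $$
where $f_k$ denotes the $k$-th component of $f$.

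First, by assumption (ii) each component admits a representation $f_k=\sum_{j=1}^{d_x+d_u} e_j f_{kj}$ with $f_{kj}\in H^s(\mbb{X}\times\mbb{U})$. Here $e_j$ ranges over the components of $(x,u)$, so $f$ vanishes at the origin in the required factored way. Substituting gives
$$ Kg=\sum_{j=1}^{d_x+d_u} e_j\,h_j,\qquad h_j=\sum_{k=1}^{d_x} f_{kj}\,(Kg_k). $$

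Second, Corollary \ref{cor:Koopman.Sobolev} gives $Kg_k\in H^s(\mbb{X}\times\mbb{U})$ with $\|Kg_k\|_{H^s}\le c\|g_k\|_{H^s}$. Its hypotheses hold: $f\in\rg C^s$ together with $s>(d_x+d_u)/2$ yields continuity and the smoothness used there, and conditions (i) and (iii) are the same.

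Third, since $s>(d_x+d_u)/2$ and the domain is bounded with Lipschitz boundary, $H^s(\mbb{X}\times\mbb{U})$ is a Banach algebra under pointwise multiplication. Hence $\|f_{kj}Kg_k\|_{H^s}\le c\|f_{kj}\|_{H^s}\|g_k\|_{H^s}$, so each $h_j\in H^s(\mbb{X}\times\mbb{U})$ and $Kg\in\rg H^s(\mbb{X}\times\mbb{U})$.

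The remaining issue is boundedness with respect to the actual RKHS norms. The norm on $\rg H^s$ is the quotient norm
$$ \|g\|=\inf\Bigl\{\bigl(\textstyle\sum_k\|g_k\|_{H^s}^2\bigr)^{1/2}\Bigr\} $$
over all representations of $g$, which is the standard norm of a sum of product-kernel spaces underlying Lemma \ref{lem:linear-radial}. Taking a near-optimal representation of $g$, the estimates above give a norm of $(h_j)_j$ bounded by $c\|g\|$, and hence $\|Kg\|_{\rg H^s}\le c\|g\|_{\rg H^s}$ with $c$ independent of the representation. Alternatively, linearity plus the fact that $K$ is everywhere defined between Hilbert spaces lets us invoke the closed graph theorem. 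The graph is closed because RKHS convergence implies pointwise convergence, and composition respects pointwise limits.

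The main obstacle is interpreting $\rg C^s$. As written, its coefficients $f_k$ are in $H^s$, which is what the multiplier argument needs. If instead only $C^s$ coefficients are available, we would use that $C^s_{\mr b}$ functions are multipliers on $H^s$ over bounded Lipschitz domains. Separately, the derivative bound in Corollary \ref{cor:Koopman.Sobolev} requires bounded derivatives of $f$ up to order $s$, which an $H^s$ factorization does not give pointwise. We would therefore assume the factorization of $f$ has $C^s_{\mr b}$ coefficients, so that both uses are justified.
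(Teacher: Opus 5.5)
Your proposal is correct and follows the route the paper intends. The paper states this corollary without a separate proof, treating it as a consequence of Lemma \ref{lem:linear-radial} and Corollary \ref{cor:Koopman.Sobolev} carried over ``with no essential change'' from its cited autonomous-case works, and your steps fill in exactly that: identifying the linear--radial RKHSs with $\rg H^s$, the identity $K(e_kg_k)=f_k\,(Kg_k)$ with $f_k=\sum_j e_jf_{kj}$, the multiplier estimate, and the quotient-norm (or closed-graph) bound. Your final caveat is the right one, so state it up front: Corollary \ref{cor:Koopman.Sobolev} needs $f\in C^s$, which holds when the coefficients in $\rg C^s$ are taken in $C^s_{\mr b}$ (as the notation suggests) but not under the literal $H^s$-coefficient definition, so your second step should invoke that reading rather than claim the hypotheses follow from $s>(d_x+d_u)/2$.
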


\par In this RKHS, we denote the canonical features by $\rg\kappa(x,\cdot)=\rg\phi_x$ and $\rg\vk((x,u), (\cdot,\cdot))=\rg\vf_{(x,u)}$. While for the radial kernel, we always have $\|\phi_x\|=\sqrt{\kappa(x,x)}=\sqrt{\rho(0)}$ (a constant independent of $x$) and similarly $\|\varphi_{(x,u)}\|=\sqrt{\rho(0)}$, for the linear--radial kernel, we have, instead, 
\begin{equation}\label{eq:canonical.feature.norms}
    \begin{aligned}
        \|\rg\phi_x\| &= \sqrt{\rg\kappa(x,x)}=|x| \sqrt{\rho(0)}, \text{ and } \\
        \|\rg\vf_{(x,u)}\| &= \sqrt{\rg\vk((x,u),(x,u))} = |(x,u)| \sqrt{\rho(0)}.
    \end{aligned}
\end{equation}

\section{DISSIPATIVITY AS A LINEAR OPERATOR INEQUALITY} \label{sec:dissipativity}
Now, equipped with tools from operator theory, in this section, we show that the dissipative inequality \eqref{eq:dissipativity} becomes a linear operator inequality. As justified in the last subsection of \S\ref{sec:preliminaries}, we focus on the RKHS specified by the linear--radial kernels $\rg\kappa$ (on $\mbb{X}$) and $\rg\vk$ (on $\mbb{X}\times \mbb{U}$). 
To ensure that $K$ is well-defined and bounded, we always assume the following. 
\begin{assumption}\label{assum:regular}
    The conditions in Corollary \ref{cor:Koopman.LinearSobolev} hold.\footnote{While these condition appear to restrict the class of dynamics amenable to our treatment, it can be easily verified that all \emph{continuous-time dynamics} with sufficiently high-order smoothness, discretized with a small sampling time, satisfy the conditions (ii) and (iii). }
\end{assumption}

\subsection{Kernel Quadratic Forms} 
Intuitively, from \eqref{eq:canonical.feature.norms}, we have $\|\rg\phi_x\|^2 \propto |x|^2$ and similarly $\|\rg\vf_{(x,u)}\|^2\propto |(x,u)|^2$, and hence we consider the quadratic forms on the canonical features as generalizations of the quadratic forms in the original state and input components. 
That is, we will consider storage and supply rate functions as following ``kernel quadratic forms''. 
\begin{definition}
    A \emph{kernel quadratic form} on $\mc{H}_{\rg\kappa}(\mbb{X})$, with $\rg\kappa(x, \cdot)$ for any $x\in \mbb{X}$ denoted as $\rg\phi_x$, refers to a function $x\mapsto \ip{\rg\phi_x}{P\rg\phi_x}$ with a linear, bounded, self-adjoint operator $P$. 
    Similarly, a kernel quadratic form on $\mc{H}_{\rg\vk}(\mbb{X}\times \mbb{U})$, with canonical feature of any $(x,u)\in \mbb{X}\times\mbb{U}$ denoted as $\rg\vf_{(x,u)}$, refer to a function $(x,u)\mapsto \ip{\rg\vf_{(x,u)}}{S\rg\vf_{(x,u)}}$ with a linear, bounded, self-adjoint operator $S$. 
\end{definition}

The following two remarks give some typical examples.
\begin{remark}
    Let $v(x)=|x|^2=\sum_{k=1}^{d_x} e_k(x)^2$. Since $1\in H^s(\mbb{X})$ under our Assumption \ref{assum:regular}, $e_k\in \rg{H}^s(\mbb{X})=\mc{H}_{\rg\kappa}(\mbb{X})$. We thus have 
    $$\textstyle v(x) = \sum_{k=1}^{d_x} |\ip{e_k}{\rg\phi_x}|^2 = \ip{\rg\phi_x}{P \rg\phi_x},\, P:=\sum_{k=1}^{d_x} e_k\times e_k. $$
    Similarly, if we let $v(x)=x^\top Mx$ where $M \succeq 0$, then by finding the eigenvectors $\{w_k\}_{k=1}^{d_x}$ and eigenvalues $\{\lambda_k\}_{k=1}^{d_x}$ of $M$, we have 
    $$\textstyle v(x) = \sum_{k=1}^{d_x} |\ip{\omega_k}{\rg\phi_x}|^2 = \ip{\rg\phi_x}{P \rg\phi_x}\, P:=\sum_{k=1}^{d_x} \omega_k\times \omega_k, $$
    where $\omega_k \in \rg H^s(\mbb{X}): x\mapsto w_k^\top x$. 
\end{remark}
\begin{remark}\label{remark.2}
    Suppose that $s(y,u) = h(x,u)^\top u - \epsilon|y|^2$ ($d_y=d_u$). 
    If $h_k\in \rg H^s(\mbb{X}\times \mbb{U}) = \mc{H}_{\rg\vk}(\mbb{X}\times \mbb{U})$ for all $k=1,\dots,d_y$, then, via $u_k = \ip{e_{d_x+k}}{\rg\vf_{x,u}}$, we obtain  
    $$\textstyle s(y,u) = \sum_{k=1}^{d_y} \ip{h_k}{\rg\vf_{x,u}} \ip{e_{d_x+k}}{\rg\vf_{x,u}} - \epsilon |\ip{h_k}{\rg\vf_{x,u}}|^2.$$
    Then by letting
    $$\textstyle S = \sum_{k=1}^{d_y}\bra{ \frac{1}{2}h_k\times e_{d_x+k} + \frac{1}{2} e_{d_x+k} \times h_k - \epsilon h_k\times h_k}, $$
    we reach at $s(y,u) = \ip{\rg\vf_{x,u}}{S\rg\vf_{x,u}}$.
\end{remark}

As seen from the examples, when the supply or storage function can be written as a linear combination of squares or products of $\rg H^s$-type functions, then it is a kernel quadratic form specified by a \emph{finite-rank} adjoint operator. 
The class of finite-rank adjoint operators, however, cannot be complete. For completion, we endow the adjoint operators with a norm. 
% We say that an adjoint operator $P$ on Hilbert space $\mc{G}$ belongs to the trace class $B_1(\mc{G})$ if under any orthonormal basis $\{w_j\}_{j=1}^\infty$, it can be written as $P=\sum_{j=1}^\infty \lambda_jw_j\times w_j$ with $(\lambda_j)_{j=1}^\infty \in \ell^1$. In this case, we define $\mr{tr}\,P = \sum_{j=1}^\infty \lambda_j$. 
We say that $P$ on Hilbert space $\mc{G}$ belongs to the Hilbert--Schmidt class $B_2(\mc{G})$ if under any orthonormal basis $\{p_j\}_{j=1}^\infty$, it can be written as $P=\sum_{j=1}^\infty \lambda_j p_j\times p_j$ with $(\lambda_j)_{j=1}^\infty \in \ell^2$. In this case, we define the Hilbert--Schmidt norm $\|P\|_{B_2} = \left( \sum_{j=1}^\infty |\lambda_j|^2\right)^{1/2}$. 
The space $B_2(\mc{G})$ is then a complete, actually Hilbert, space, in which the subspace of finite-rank operators is dense \cite{lax2002functional}. % (with inner product $\ip{P}{Q}_{B_2}=\mr{tr}(PQ)$) 

% \begin{lemma}
%     Suppose $\{P_n\}_{n=1}^\infty$ be a sequence of self-adjoint Hilbert--Schmidt operators on Hilbert space $\rg H^s(\mbb{X})$ that is convergent to $P$ in $B_2(\rg H^s(\mbb{X}))$. Let $v_n(x)=\ip{\rg\phi_x}{P_n\rg\phi_x}$ and $v(x)=\ip{\rg\phi_x}{P\rg\phi_x}$. Then $\{v_n\}$ converges to $v$ pointwise, and moreover, $\{|v_n(x)-v(x)|/|x|^2\}$ converges uniformly on $\mbb{X}$. 
% \end{lemma}
% \begin{proof}
%     It is easy to verify that
%     $$\begin{aligned}
%         &\frac{|v_n(x)-v(x)|}{|x|^2} = \frac{\ip{\rg\phi_x}{(P_n-P)\rg\phi_x}}{|x|^2}
%         \\
%         &= \frac{|\mr{tr}(P_n-P)(\rg\phi_x\times \rg\phi_x)|}{|x|^2} 
%         \leq \|P_n-P\|_{B_2}\frac{\|\rg\phi_x\|^2}{|x|^2}.
%     \end{aligned}$$
%     Here we have used the identity $\mr{tr}(PQ)=\mr{tr}(QP)$ and $\|g\times g\|_{B_2}=\|g\|^2$. Since $\|\rg\phi_x\|^2=\rho(0)|x|^2$, the proof is done. 
% \end{proof}
% Clearly, similar conclusion holds for Hilbert--Schmidt kernel quadratic forms on $\rg H^s(\mbb{X}\times \mbb{U})$. 

\subsection{Dissipative Inequality}
We recall that the adjoint Koopman operator $K^*$ satisfies the relation in Corollary \ref{cor:push.forward}. Hence, if the storage function $v$ is a kernel quadratic form specified by a positive semidefinite self-adjoint operator $P$ (i.e., $\ip{g}{Pg}\geq 0$ for all $g\in \mc{H}_{\rg\kappa}(\mbb{X})$, then $v(x)=\ip{\rg\phi_x}{P\rg\phi_x}$ and hence 
$$\begin{aligned}
    v(f(x, u)) &=\ip{\rg\phi_{f(x,u)}}{P\rg\phi_{f(x,u)}} = \ip{K\rg\vf_{(x,u)}}{PK\rg\vf_{(x,u)}} \\ 
    & = \ip{\rg\vf_{(x,u)}}{K^*PK\rg\vf_{(x,u)}}. 
\end{aligned}$$
To convert $v(x)$, which is a kernel quadratic form on $\mc{H}_\kappa(\mbb{X})$, to a kernel quadratic form on $\mc{H}_{\rg\vk}(\mbb{X}\times \mbb{U})$, we need a linear operator that maps $\rg\vf_{(x,u)}$ to $\rg\phi_x$. 
To this end, define $E: \rg{H}^s(\mbb{X})\rightarrow \rg{H}^s(\mbb{X}\times \mbb{U})$, $g\mapsto Eg$, where $(Eg)(x,u)=g(x)$. In other words, $E$ is the embedding of state-dependent functions into state- and input-dependent functions. 
\begin{lemma}
    Under Assumption \ref{assum:regular}, $E \in B(\rg{H}^s(\mbb{X}), \rg{H}^s(\mbb{X}\times \mbb{U}))$, and its adjoint operator $E^*\in B(\rg{H}^s(\mbb{X}\times \mbb{U}), \rg{H}^s(\mbb{X}))$ satisfies $E^* \rg\vf_{(x,u)} = \rg\phi_x$ for all $(x, u)\in \mbb{X}\times \mbb{U}$. 
\end{lemma}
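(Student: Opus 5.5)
The plan is to view $E$ as the Koopman operator of the trivial map $\pi:(x,u)\mapsto x$, so that $Eg=g\circ\pi$, and then repeat the argument of Lemma~\ref{lem:Koopman.Sobolev} and Corollary~\ref{cor:Koopman.LinearSobolev}. The adjoint identity will then follow from the reproducing property, exactly as in Corollary~\ref{cor:push.forward}.

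\textbf{Step 1: $E$ maps into the right space.} Take $g\in\rg H^s(\mbb{X})$ and write $g=\sum_{k=1}^{d_x} e_k g_k$ with $g_k\in H^s(\mbb{X})$. Then
\[
Eg=\sum_{k=1}^{d_x} \tilde e_k\,(g_k\circ\pi),
\]
where $\tilde e_k(x,u)=x_k$ is the $k$-th coordinate function on $\mbb{X}\times\mbb{U}$. Padding with zero coefficients for the input coordinates $k=d_x+1,\dots,d_x+d_u$ gives the form required by $\rg H^s(\mbb{X}\times\mbb{U})$. So it suffices to show that $g_k\circ\pi\in H^s(\mbb{X}\times\mbb{U})$. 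This holds because every derivative in $u$ vanishes and $\partial^\alpha_x(g_k\circ\pi)=(\partial^\alpha g_k)\circ\pi$. Hence
\[
\|g_k\circ\pi\|_{H^s(\mbb{X}\times\mbb{U})}^2\le |\mbb{U}|\,\|g_k\|_{H^s(\mbb{X})}^2,
\]
which is finite since $\mbb{U}$ is bounded. This is Lemma~\ref{lem:Koopman.Sobolev} with $f=\pi$, and it is even simpler because no change of variables is needed.

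\textbf{Step 2: boundedness.} This is the one delicate step. The norm on $\rg H^s$ is the RKHS norm of $\rg\kappa$, and by Lemma~\ref{lem:linear-radial} this is equivalent to the quotient norm $\inf\{(\sum_k\|g_k\|_{H^s}^2)^{1/2} : g=\sum_k e_kg_k\}$. Taking the infimum over representations of $g$ in the Step~1 bound gives $\|Eg\|\le c\|g\|$.

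If that norm identification is not available, I would fall back on the closed graph theorem. Suppose $g_n\to g$ and $Eg_n\to h$. Convergence in an RKHS implies pointwise convergence, so $h(x,u)=\lim_n g_n(x)=g(x)=(Eg)(x,u)$, i.e.\ $h=Eg$. The graph of $E$ is therefore closed, and $E$ is bounded because it is everywhere defined by Step~1. I expect this route to be the cleanest.

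\textbf{Step 3: the adjoint.} Since $E$ is bounded, $E^*$ exists and is bounded. For any $g\in\rg H^s(\mbb{X})$, the reproducing property in both spaces gives
\[
\ip{E^*\rg\vf_{(x,u)}}{g}=\ip{\rg\vf_{(x,u)}}{Eg}=(Eg)(x,u)=g(x)=\ip{\rg\phi_x}{g}.
\]
Because this holds for every $g$, we conclude $E^*\rg\vf_{(x,u)}=\rg\phi_x$.
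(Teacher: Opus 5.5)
Your proof is correct and follows the route the paper evidently intends. The paper states this lemma without proof, implicitly treating $E$ as the Koopman operator of the map $(x,u)\mapsto x$, which meets the hypotheses of Corollary \ref{cor:Koopman.LinearSobolev} (with $\mr{D}_x$ the identity), and obtaining $E^*\rg\vf_{(x,u)}=\rg\phi_x$ by the reproducing-property argument of Corollary \ref{cor:push.forward}. Your direct derivative computation, which needs no change of variables unlike Lemma \ref{lem:Koopman.Sobolev}, and your closed-graph argument, which gets boundedness from well-definedness alone, supply the details the paper leaves to the reader.
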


\par With the conclusion of the above lemma, we easily obtain $$v(x) = \ip{E^\ast \rg\vf_{(x,u)}}{P E^\ast \rg\vf_{(x,u)}} = \ip{\rg\vf_{(x,u)}}{EPE^\ast \rg\vf_{(x,u)}}. $$
If, in addition, the supply rate function $s(h(x,u),u)$ can also be written as a kernel quadratic form $\ip{\rg\vf_{(x,u)}}{S\rg\vf_{(x,u)}}$ with some self-adjoint operator $S$ (for which Remark \ref{remark.2} gives a typical example), then the dissipative inequality \eqref{eq:dissipativity} becomes the following linear operator inequality, imposed on all state--input pairs in $\mbb{X}\times \mbb{U}$. 

\begin{theorem}[Linear operator inequality for dissipativity]
    Suppose that Assumption \ref{assum:regular} hold. For the system \eqref{eq:sys} to be dissipative with respect to a supply rate $s(y,u) = \ip{\rg\vf_{(x,u)}}{S\rg\vf_{(x,u)}}$, it is sufficient that there exists a positive semidefinite operator $P \in B(\mc{H}_{\rg\kappa}(\mbb{X}), \mc{H}_{\rg\kappa}(\mbb{X}))$ such that $\forall (x,u)\in \mbb{X}\times \mbb{U}$, 
    \begin{equation}\label{eq:dissipativity.operator}
      \ip{\rg\vf_{(x,u)}}{\bra{KPK^*-EPE^*-S}\rg\vf_{(x,u)}}\leq 0.  
    \end{equation}
    In this case, the storage function can be chosen as $v(x)=\ip{\rg\phi_x}{P\rg\phi_x}$. 
    Moreover, if $S$ is Hilbert--Schmidt, then as long as a positive semidefinite operator solution exists for $P$, a Hilbert--Schmidt solution for $P$ exists.  
\end{theorem}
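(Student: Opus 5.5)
The plan has two parts: a direct verification of the sufficiency claim, and a representation argument for the Hilbert--Schmidt claim.

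\emph{Sufficiency.} Take $v(x)=\ip{\rg\phi_x}{P\rg\phi_x}$. First I check that $v$ is an admissible storage function. Since $P\succeq 0$, $v\ge 0$. By \eqref{eq:canonical.feature.norms}, $\|\rg\phi_0\|=0$, so $\rg\phi_0=0$ and $v(0)=0$.

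Next I lift both sides of \eqref{eq:dissipativity} onto $\mc{H}_{\rg\vk}(\mbb{X}\times\mbb{U})$. By Corollary~\ref{cor:push.forward}, transported to the linear--radial RKHS via Corollary~\ref{cor:Koopman.LinearSobolev}, we have $\rg\phi_{f(x,u)}=K^*\rg\vf_{(x,u)}$. Hence $v(f(x,u))=\ip{\rg\vf_{(x,u)}}{KPK^*\rg\vf_{(x,u)}}$; this is the correct ordering for the adjoint convention, matching \eqref{eq:dissipativity.operator}. By the embedding lemma, $\rg\phi_x=E^*\rg\vf_{(x,u)}$, so $v(x)=\ip{\rg\vf_{(x,u)}}{EPE^*\rg\vf_{(x,u)}}$. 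The supply rate is $\ip{\rg\vf_{(x,u)}}{S\rg\vf_{(x,u)}}$ by hypothesis. Subtracting, $v(f(x,u))-v(x)-s(h(x,u),u)$ equals the left side of \eqref{eq:dissipativity.operator}, which is $\le 0$ for every $(x,u)$. This is exactly \eqref{eq:dissipativity}.

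\emph{Hilbert--Schmidt solution.} The inequality \eqref{eq:dissipativity.operator} depends on $P$ only through the function $v$. So it suffices to show that $v$ admits another representation $v(x)=\ip{\rg\phi_x}{P'\rg\phi_x}$ with $P'\succeq 0$ and $P'\in B_2(\mc{H}_{\rg\kappa}(\mbb{X}))$. The new operator then satisfies the same inequality verbatim.

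I factor $P=A^*A$, for instance $A=P^{1/2}$, so that $v(x)=\|A\rg\phi_x\|^2$. I then use the structure of the linear--radial feature: $\rg\phi_x=\sum_{k}x_k\,(e_k\phi_x)$, with $\phi_x=\kappa(x,\cdot)$. This gives
\[ v(x)=\sum_{i,j=1}^{d_x}x_ix_j\,q_{ij}(x),\qquad q_{ij}(x)=\ip{A(e_i\phi_x)}{A(e_j\phi_x)}. \]
The aim is to show that each $q_{ij}$ is a positive semidefinite matrix-valued function with Sobolev regularity. Then $v$ is a finite or countable sum of squares $\sum_m |\omega_m(x)|^2$, where $\omega_m\in\rg H^s(\mbb{X})$ and $\sum_m\|\omega_m\|^2<\infty$. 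Concretely, I would take $\omega_m(x)=\ip{\psi_m}{A\rg\phi_x}$ for an orthonormal basis $\{\psi_m\}$ of the closure of the range of $A$. Then $P'=\sum_m\omega_m\times\omega_m$ is positive semidefinite. It is trace class, hence Hilbert--Schmidt, provided $\sum_m\|\omega_m\|^2<\infty$. This is the same mechanism as in Remark~1, where the non-Hilbert--Schmidt $P=\rho(0)^{-1}I$ is replaced by the finite-rank $\sum_k e_k\times e_k$ representing $|x|^2$.

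\emph{Main obstacle.} The hard step is the summability $\sum_m\|\omega_m\|^2<\infty$, equivalently that $v$ lies in the RKHS of the product kernel $\rg\kappa(x,x')^2$, i.e.\ the diagonal image of $B_2$. My first attempt would use the algebra property of $H^s$ for $s>d_x/2$: write $\rg\kappa^2=(x^\top x')^2\rho^2$ and argue that $\rho^2$ again has Sobolev-order Fourier decay. I would then bound $q_{ij}$ in $H^s$ via the regularity of $x\mapsto e_i\phi_x$ as an $H$-valued map. I expect only roughly half the smoothness to transfer this way, so some loss in the index $s$ may be unavoidable. If that fails, the fallback is to restrict attention to the sampled finite-dimensional setting of \S\ref{sec:estimation}, where finite rank is automatic.
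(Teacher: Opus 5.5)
The sufficiency argument is correct and matches the paper's ``obvious'' first half. You supply the details the paper omits: $v\ge 0$ from $P\succeq 0$, $v(0)=0$ from $\rg\phi_0=0$, and the identities $K^*\rg\vf_{(x,u)}=\rg\phi_{f(x,u)}$ and $E^*\rg\vf_{(x,u)}=\rg\phi_x$.

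The Hilbert--Schmidt part has a genuine gap, at the step you flag, and your concrete construction cannot close it because it is circular. With $\omega_m(x)=\ip{\psi_m}{A\rg\phi_x}$ you have $\omega_m=A^*\psi_m$, so
\[
\textstyle P'=\sum_m\omega_m\times\omega_m=A^*\Bigl(\sum_m\psi_m\times\psi_m\Bigr)A=A^*A=P,\qquad \sum_m\|\omega_m\|^2=\|A\|_{B_2}^2=\mr{tr}\,P .
\]
You recover $P$ itself, and the summability you need is exactly that $P$ be trace class. That is stronger than the conclusion, and it fails for merely bounded $P$. For example, $P=I$ gives $v=\rho(0)|x|^2$; this $v$ does have a finite-rank representation, but it comes from the explicit functions $e_k$, not from this mechanism. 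A genuine re-representation would have to show that $v$ lies in the image of the positive semidefinite Hilbert--Schmidt operators under $P\mapsto\ip{\rg\phi_x}{P\rg\phi_x}$, which is a cone inside the RKHS of $\rg\kappa^2$. You do not prove this. As you yourself anticipate, the diagonal of a merely bounded $P$ need not have the regularity of that space, so the step may simply be false. The fallback to the sampled setting proves a different statement.

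Structurally, your route never uses the hypothesis that $S$ is Hilbert--Schmidt. It tries to show that \emph{every} bounded positive semidefinite kernel quadratic form has a Hilbert--Schmidt representation with the \emph{same} storage function. That is far more than the theorem claims: it only asks for \emph{some} Hilbert--Schmidt solution, possibly with a different $v$, and the hypothesis on $S$ is what is meant to produce it.

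The paper's route is built that way. For finite-rank $S=\sum_{i=1}^r\omega_i\times\omega_i$, it replaces $P$ by the compression $\Pi P\Pi^*$, where $\Pi$ is the orthogonal projection onto $\mr{span}\{K^*\omega_i,E^*\omega_i\}_{i=1}^r$, so finite rank is automatic. It then handles Hilbert--Schmidt $S$ by density of finite-rank operators in $B_2$.

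Be aware that this is only sketched in the paper. The inequality is tested at $\rg\vf_{(x,u)}$, and $K^*\rg\vf_{(x,u)}=\rg\phi_{f(x,u)}$ and $E^*\rg\vf_{(x,u)}=\rg\phi_x$ generally do not lie in $\mr{ran}\,\Pi$. So both the claim that the compression still satisfies \eqref{eq:dissipativity.operator} and the passage to the limit need real arguments. Still, the intended mechanism is to exploit the structure of $S$ and the freedom to change the storage function, not to re-represent the given $v$.
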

\begin{proof}
    The first half of the theorem is obvious. For the latter half of the claim, we note that if $S$ is finite-rank, say $S=\sum_{i=1}^r\omega_i\times \omega_i$, then any solution $P$ to \eqref{eq:dissipativity.matrix}, when replaced by $\Pi P\Pi^\ast$ where $\Pi$ is the projection onto $\mr{span} \{K^\ast \omega_i, E^\ast \omega_i\}_{i=1}^r$, is still a solution to \eqref{eq:dissipativity.matrix} and is finite-rank. 
    The case where $S$ is Hilbert--Schmidt can then be proved following the denseness of finite-rank operators. 
\end{proof}

\begin{remark}
    While the sufficiency of \eqref{eq:dissipativity.matrix} is obvious, the necessity requires that the storage function $v$ be indeed a kernel quadratic form. 
    Following the theory of dissipative systems \cite{lozano2013dissipative}, it is possible to construct an ``maximum available storage function'' by
    \begin{displaymath}
        \textstyle v^*(x)=\sup_{(x_0, u_0, \cdots, x_k)\in \mbb{T}(x)} \sum_{t=0}^{k-1} s(h(x_t, u_t),u_t),
    \end{displaymath} 
    where $\mbb{T}(x)$ is the collection of trajectories on which $x_0=x$ is steered to $x_k=0$ by inputs $u_0, \cdots, u_{k-1}\in \mbb{U}$ for some $k\in \mN$. 
    If for all $x$, $\mbb{T}(x)\neq \emptyset$, then $v^\ast$ is nonnegative, bounded, and satisfies the dissipative inequality. 
    For $s$ in a kernel quadratic form, the spectral decomposition $S=\sum_{i=1}^\infty \omega_i^2$ with $\omega_i\in \rg H_s(\mbb{X}\times \mbb{U})$ gives 
    $$\textstyle v^\ast(x)=\sup_{\mbb{T}(x)} \sum_{t=0}^{k-1}\sum_{i=1}^\infty \omega_i(x_t,u_t)^2. $$
    Each $\omega_i$, due to Assumption \ref{assum:regular}, as a function of $x$ must belong to $\rg H^s(\mbb{X})$. Therefore, each trajectory $\tau\in \mbb{T}(x)$ determines a sum of squares of $\rg H^s(\mbb{X})$-type functions of $x$, namely a kernel quadratic form, say $\ip{\rg\phi_x}{P_\tau \rg\phi_x}$. However, the point-wise supremum over $\tau\in \mbb{T}(x)$ may not be guaranteed to yield a kernel quadratic form. 
    At this point, the conditions that ensure $v^\ast\in \rg H^s(\mbb{X})$ remains an interesting open problem worth future studies. 
\end{remark}
\begin{remark}
    We wrote the inequality \eqref{eq:dissipativity.operator} as a \emph{pointwise inequality} at all $\rg\vf_{(x,u)}$, instead of a negative semidefiniteness constraint $KPK^*-EPE^*-S \preceq 0$. The latter is a sufficient condition, which, however, can be too strong. 
    % In the infinite-dimensional space, we still allow $KPK^*-EPE^*-S$ to have positive eigenvalues, but only restrict them to be non-consequential. 
    After all, we are concerned with dissipative inequality as an inequality over all $(x,u)$-pairs. Their canonical features $\rg\vf_{(x,u)}$ only occupy a $(d_x+d_u)$-dimensional compact manifold in the infinite-dimensional RKHS. 
\end{remark}

\subsection{Finite-Rank Approximation from Data}
\par The verification of linear operator inequality \eqref{eq:dissipativity.operator} at all $(x,u)$-pairs, however, is tractable only in a finite data-driven approximate setting. That is, we collect a sample $\mbb{S} = \{(x_i, u_i, x_i^+)\}_{i=0}^{n-1}$, where $x_i^+=f(x_i,u_i)$ for all $i\in \mbb{I}_n$, and consider \eqref{eq:dissipativity.operator} as $n$ inequality constraints on $P$ and $S$. 
In this setting, $P$ and $S$ should both be approximated as finite-rank operators, which allow matrix representations and convert \eqref{eq:dissipativity.operator} into data-based linear matrix inequalities. We will discuss such approximations and analyze the resulting statistical error bounds in the next section. 

\par % Here we show the ``matrix form'' of \eqref{eq:dissipativity.operator}. 
Let the finite-rank approximations of $S$ and a corresponding approximation of $P$ be given by: 
\begin{equation}
    % \hat{K}^\ast = \sum_{i,j=1}^n \theta^K_{ij} \rg\phi_{x_i^+} \times \rg\vf_{(x_j, u_j)}, \, 
    % \hat{E}^\ast = \sum_{i,j=1}^n \theta^E_{ij} \rg\phi_{x_i} \times \rg\vf_{(x_j^+, u_j)}, \\
    \textstyle \hat{S} = \sum_{i,j=1}^n \theta^S_{ij} \rg\vf_{(x_i, u_i)} \times \rg\vf_{(x_j, u_j)}, \, 
    \hat{P} = \sum_{i,j=1}^n \theta^P_{ij} \rg\phi_{x_i} \times \rg\phi_{x_j}. 
\end{equation}
Denote the matrices of coefficients $\Theta_S = [\theta^S_{ij}]_{i,j=1}^n$ and $\Theta_P$, and kernel matrices $G_{xx} = [\rg\kappa(x_i,x_j)]_{i,j=1}^n$, $G_{xx^+} = [\rg\kappa(x_i,x_j^+)]_{i,j=1}^n$, and $G_{xu} = [\rg\varkappa((x_i,u_i), (x_j,u_j))]_{i,j=1}^n$. 
We can verify that for each $k\in \mbb{I}_n$, 
$$\ip{\rg\vf_{(x_k,u_k)}}{K\hat{P}K^*\rg\vf_{(x_k,u_k)}} = [G_{xx^+}^\top \Theta_P G_{xx^+}]_{kk}$$ $$\ip{\rg\vf_{(x_k,u_k)}}{E\hat{P}E^*\rg\vf_{(x_k,u_k)}} = [G_{xx}^\top \Theta_P G_{xx}]_{kk},$$
$$\ip{\rg\vf_{(x_k,u_k)}}{\hat{S} \rg\vf_{(x_k,u_k)}} = [G_{xu}^\top \Theta_S G_{xu}]_{kk}.$$ 
Now the inequality of interest \eqref{eq:dissipativity.operator} is approximated as the following linear matrix inequality on $\Theta_P$ and $\Theta_S$:
\begin{equation}\label{eq:dissipativity.matrix}
    \mr{diag} \bra{G_{xx^+}^\top \Theta_P G_{xx^+} - G_{xx}^\top \Theta_P G_{xx} - G_{xu}^\top \Theta_S G_{xu}} \le 0. 
\end{equation}

\par Naturally, finite-rank approximation results in an error in dissipativity estimation. 
Suppose that $S$ is also estimated by $\hat{S}$, we solve a finite-rank operator $\hat{P}$ that satisfies \eqref{eq:dissipativity.matrix} and let $\hat{v}(x)=\ip{\rg\phi_x}{\hat{P}\rg\phi_x}$. 
When evaluating the actual dissipation:
\begin{equation}
    \begin{aligned}
        w(x, u) &= \hat{v}(f(x,u))-\hat{v}(x)-s(h(x,u),u) \\
        &= \ip{\rg\vf_{(x,u)}}{ \underbrace{(K\hat{P}K^\ast - E\hat{P}E^\ast - S)}_{=:W} \rg\vf_{(x,u)}}
    \end{aligned}
\end{equation} 
it may not be non-positive, although on the sample points, replacing $W$ by $\hat{W} = K\hat{P}K^\ast - E\hat{P}E^\ast - \hat{S}$ guarantees non-positivity. The following operator error bound is obvious. 
\begin{lemma}\label{lem:W.generalization}
    Suppose that $S$ is Hilbert--Schmidt with estimation error $\|S-\hat{S}\|_{B_2} \leq \epsilon_S$, and that $\hat{P}$ is a finite-rank operator satisfying \eqref{eq:dissipativity.matrix}. Then $\|W-\hat{W}\|_{B_2}\leq \epsilon_S$. 
\end{lemma}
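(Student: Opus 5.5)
The plan is to compute $W-\hat W$ directly from the definitions and observe that every term involving $\hat P$ cancels. By definition,
$$W = K\hat{P}K^\ast - E\hat{P}E^\ast - S, \qquad \hat{W} = K\hat{P}K^\ast - E\hat{P}E^\ast - \hat{S},$$
with the same operators $K$, $E$ and the same finite-rank $\hat P$ in both. Hence $W-\hat W = \hat S - S$ as operators on $\mc{H}_{\rg\vk}(\mbb{X}\times\mbb{U})$, and
$$\|W-\hat W\|_{B_2} = \|S-\hat S\|_{B_2}\le \epsilon_S.$$
The constraint \eqref{eq:dissipativity.matrix} plays no role in this bound. It only ensures that $\hat P$ is a legitimate finite-rank candidate, which makes $\hat W$ nonpositive at the sample points.

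The one point to check is that both $W$ and $\hat W$ are well defined, so that the difference makes sense. Under Assumption \ref{assum:regular}, Corollary \ref{cor:Koopman.LinearSobolev} gives $K\in B(\mc{H}_{\rg\kappa}(\mbb{X}), \mc{H}_{\rg\vk}(\mbb{X}\times\mbb{U}))$, and the preceding lemma gives $E$ bounded. Because $\hat P$ is finite-rank, the compositions $K\hat PK^\ast$ and $E\hat PE^\ast$ are bounded and finite-rank, hence Hilbert--Schmidt. Consequently $W$ and $\hat W$ lie in $B_2$ whenever $S$ and $\hat S$ do, and $\hat S$ does since it is finite-rank.

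There is essentially no obstacle. The only subtlety is bookkeeping: $W$ is defined with the same $\hat P$ as $\hat W$, not with an exact solution $P$. Because of this, no error from approximating $K$ or from the data sample enters the difference.
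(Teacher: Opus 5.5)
Your proposal is correct and matches the paper's reasoning; the paper calls the bound obvious and gives no proof. Since $W$ and $\hat{W}$ share the same term $K\hat{P}K^\ast - E\hat{P}E^\ast$, you get $W-\hat{W}=\hat{S}-S$, hence $\|W-\hat{W}\|_{B_2}=\|S-\hat{S}\|_{B_2}\leq\epsilon_S$, and you are right that the constraint \eqref{eq:dissipativity.matrix} plays no role in this step.
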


\par % Based on $\epsilon_W$, when searching for the operator $\hat{P}$ under the constraints \eqref{eq:dissipativity.matrix} for a \emph{given} supply rate function, it is desirable to minimize the (squared) Hilbert--Schmidt norm of $\hat{P}$. 
The operator norm error in $W$ turns out to give a resulting bound on the square-scaled violation of dissipativity. 
\begin{corollary}\label{cor:W.scaled}
    Under the conditions of Lemma \ref{lem:W.generalization}, we have
    $$ w(x_i,u_i)/|(x_i,u_i)|^2 \leq \epsilon_S \rho(0), \enspace i\in \mbb{I}_n \text{ with } |(x_i,u_i)|\neq 0.$$
    That is, on the sample, the violation of dissipativity is bounded by at most a quadratic function of $(x,u)$. 
\end{corollary}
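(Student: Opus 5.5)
The plan is to compare the true dissipation $w(x_i,u_i)=\ip{\rg\vf_{(x_i,u_i)}}{W\rg\vf_{(x_i,u_i)}}$ with its surrogate $\hat w(x_i,u_i)=\ip{\rg\vf_{(x_i,u_i)}}{\hat W\rg\vf_{(x_i,u_i)}}$. The surrogate is non-positive on the sample by construction, so only the difference has to be controlled. First I will record that, for each sample index $i\in\mbb{I}_n$, the constraint \eqref{eq:dissipativity.matrix} is precisely $\hat w(x_i,u_i)\le 0$. This uses the identities just stated for the diagonal entries of $G_{xx^+}^\top\Theta_P G_{xx^+}$, $G_{xx}^\top\Theta_P G_{xx}$ and $G_{xu}^\top\Theta_S G_{xu}$. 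Hence
\[
w(x_i,u_i)\le w(x_i,u_i)-\hat w(x_i,u_i)=\ip{\rg\vf_{(x_i,u_i)}}{(W-\hat W)\rg\vf_{(x_i,u_i)}}.
\]

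Next, since $W-\hat W=\hat S-S$ is Hilbert--Schmidt, I will bound the quadratic form. By Cauchy--Schwarz,
\[
|\ip{g}{(W-\hat W)g}|\le \|W-\hat W\|\,\|g\|^2\le \|W-\hat W\|_{B_2}\|g\|^2,
\]
where the operator norm is dominated by the Hilbert--Schmidt norm. Equivalently, one can write the form as $\mr{tr}\big((W-\hat W)(g\times g)\big)$ and use $\|g\times g\|_{B_2}=\|g\|^2$. Lemma \ref{lem:W.generalization} then gives $\|W-\hat W\|_{B_2}\le\epsilon_S$.

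Finally, I will substitute $g=\rg\vf_{(x_i,u_i)}$ and use \eqref{eq:canonical.feature.norms}, namely $\|\rg\vf_{(x_i,u_i)}\|^2=\rho(0)|(x_i,u_i)|^2$. This yields $w(x_i,u_i)\le\epsilon_S\rho(0)|(x_i,u_i)|^2$, and dividing by $|(x_i,u_i)|^2\neq0$ gives the claim.

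The only step needing care is the operator-norm domination by the Hilbert--Schmidt norm. It follows from the spectral representation $\sum\lambda_j p_j\times p_j$ with $\sup_j|\lambda_j|\le(\sum_j\lambda_j^2)^{1/2}$, applied to the self-adjoint operator $\hat S-S$. The rest is routine.
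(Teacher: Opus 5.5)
Your proposal is correct and follows the paper's proof essentially verbatim. Constraint \eqref{eq:dissipativity.matrix} gives $\ip{\rg\vf_{(x_i,u_i)}}{\hat W\rg\vf_{(x_i,u_i)}}\le 0$, and the remaining term is bounded by $\|W-\hat W\|_{B_2}\|\rg\vf_{(x_i,u_i)}\|^2\le\epsilon_S\rho(0)|(x_i,u_i)|^2$ via Lemma \ref{lem:W.generalization} and \eqref{eq:canonical.feature.norms}; your explicit justification that the operator norm is dominated by the Hilbert--Schmidt norm fills in a step the paper uses tacitly.
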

\begin{proof}
    Since for all $i\in \mbb{I}_n$, $\ip{\rg\vf_{(x_i,u_i)}}{\hat{W}\rg\vf_{(x_i,u_i)}} \leq 0$ holds, we have 
    $\ip{\rg\vf_{(x_i,u_i)}}{W\rg\vf_{(x_i,u_i)}} \leq \ip{\rg\vf_{(x_i,u_i)}}{(W-\hat{W})\rg\vf_{(x_i,u_i)}} \leq \|W-\hat{W}\|_{B_2} \|\rg\vf_{(x_i,u_i)}\|^2$, where, due to \eqref{eq:canonical.feature.norms}, we get $\|\rg\vf_{(x_i,u_i)}\|^2 = |(x_i,u_i)|^2 \rho(0)$. 
\end{proof}

\begin{remark}
    If the system is indeed dissipative with respect to supply rate $s(y,u) = \ip{\rg\vf_{(x,u)}}{S\rg\vf_{(x,u)}}$, then \eqref{eq:dissipativity.matrix} is feasible if $\hat{S}$ is equal to the projection of $S$ onto $\mr{span}\{\rg\vf_{(x_i,u_i)}\}_{i=1}^n$, denoted as $S_n$. However, when $\hat{S}$ is estimated with an operator error $\epsilon_S$ from $S$, i.e., a maximum error of $\epsilon_S$ from $S_n$, \eqref{eq:dissipativity.matrix} may be infeasible. Nevertheless, feasibility can be restored if we relax \eqref{eq:dissipativity.matrix} to 
    \begin{equation}\label{eq:dissipativity.matrix.relax}
    \begin{aligned}
        \Bra{G_{xx^+}^\top \Theta_P G_{xx^+} - G_{xx}^\top \Theta_P G_{xx} - G_{xu}^\top \Theta_S G_{xu}}_{ii} \leq \\ 
        \quad \epsilon_S\|\rg\vf_{(x_i,u_i)}\|^2 = \epsilon_S\rho(0)|(x_i,u_i)|^2, \enspace \forall i\in \mbb{I}_n.
    \end{aligned}
    \end{equation}
    After this, the bound guaranteed in Corollary \ref{cor:W.scaled} becomes $w(x_i,u_i)/|(x_i,u_i)|^2 \leq 2\epsilon_S \rho(0)$.
    
\end{remark}
\begin{remark}
    While Koopman operator $K$ and the embedding operator $E$ appear in the operator formulation, \emph{they do not need to be explicitly estimated from data}, as the dataset $\mbb{S}$ already covers their behavior in the recorded snapshots. 
    This not only simplifies the computation but also the eschews from Koopman operator learning, where one must calculate $G_{xx}^{-1}$ (or its regularized version) as an approximate inverse of a ``sampling operator''. The theoretical analysis of Koopman operator learning typically involves spectral theory or notions of interpolation spaces; see Kostic et al. \cite{kostic2023sharp}. 
\end{remark}

\section{DATA-DRIVEN ESTIMATION OF DISSIPATIVITY AND GENERALIZATION ERROR}\label{sec:estimation}
Now we study how $\hat{S}$ is estimated from the data sample and how the resulting $\epsilon_S$ depends on the sample size, based on which $\hat{P}$ is solved from \eqref{eq:dissipativity.matrix} or its relaxed version \eqref{eq:dissipativity.matrix.relax}. 
\begin{assumption}\label{assum:decomposition}
    The supply rate $s(h(x,u),u)$ can be written as $\sum_{j=1}^m q_j^{(0)}(x,u) q_j^{(1)}(x,u)$ where $q_j^{(0)}, q_j^{(1)}\in \rg H^s(\mbb{X}\times \mbb{U}) = \mc{H}_{\rg\vk}(\mbb{X}\times \mbb{U})$ for all $j=1,\dots,m$, where $m<\infty$.\footnote{
    As seen in Remark \ref{remark.2}, such a finite-rank assumption is appropriate and covers most practical case. If one considers a general Hilbert--Schmidt case, i.e., $s$ contains an infinite number of products of $H^s$ functions, a finite-term truncation would be needed, and the truncation error will need to be accounted for in $\epsilon_S$. 
    } 
\end{assumption}
\par Now obviously we have the true $S$ operator expressed as 
$$\textstyle S = \sum_{j=1}^m \frac{1}{2}(q_j^{(0)}\times q_j^{(1)} + q_j^{(1)}\times q_j^{(0)}).$$

\subsection{Estimation of $S$ by Kernel Ridge Regression}
\par The approximation of functions $q_j^{(0)}, q_j^{(1)}$ ($j\in \mbb{I}_m$) on the dataset $\{(x_i,u_i)\}_{i=1}^n$, with their function values at these $n$ points, boils down to a kernel-based regression problem. 
The typical formulation is a \emph{kernel ridge regression} (KRR) one, where we seek approximations of the form
$$\textstyle \hat{q}_j^{(0)} = \sum_{i=1}^n \eta_{ji}^{(0)}\rg\varphi_{(x_i, u_i)}, $$
where the coefficients $\eta_{jk}^{(0)}$ are determined by solving the following convex optimization problem:
$$\textstyle \min \, \sum_{k=1}^n \| \ip{\hat{q}_j^{(0)}}{\rg\vf_{(x_k, u_k)}} - q_j^{(0)}(x_k,u_k) \|^2 + \gamma \| \hat{q}_j^{(0)}\|_{\mc{H}_{\rg\vk}}^2. $$
The same formulation is repeated for $q_j^{(1)}$ and for all $j\in \mbb{I}_m$. The closed-form solutions for the vector of coefficients are 
\begin{equation}\label{eq:KRR}
    \eta_{j}^{(\ell)} = (G_{xu}+\beta I_n)^{-1} \Bra{q_j^{(\ell)}(x_k,u_k)}_{k=1}^n, \enspace j\in \mbb{I}_m, \ell\in\mbb{I}_2. 
\end{equation}
The resulting estimation of operator $S$ is then written as 
$$\textstyle \hat{S} = \sum_{j=1}^m \frac{1}{2}(\hat{q}_j^{(0)}\times \hat{q}_j^{(1)} + \hat{q}_j^{(1)}\times \hat{q}_j^{(0)})$$
which can then be expressed as $\hat{S} = \sum_{i,i'=1}^n \theta^S_{i,i'} \rg\vf_{(x_i,u_i)}\times \rg\vf_{(x_{i'},u_{i'})}$, where $\theta^S_{ii'} = \frac{1}{2} \sum_{j=1}^m (\eta_{ji}^{(0)} \eta_{ji'}^{(1)} + \eta_{ji}^{(1)} \eta_{ji'}^{(0)})$.  
More compactly, denote $\mr{H}^{(\ell)} = [\eta_{ji}^{(\ell)}] \in \mR^{m\times n}$, then 
\begin{equation}\label{eq:ThetaS}
    \textstyle \Theta_S = \frac{1}{2}\bra{ \mr{H}^{(0)\top}\mr{H}^{(1)} + \mr{H}^{(1)\top}\mr{H}^{(0)} }.
\end{equation}

\par As we can see, $\beta>0$ is a regularization parameter that plays the role of ensuring numerical stability and preventing overfitting. Its choice can be tuned by cross-validation. The following conclusion from Smale \& Zhou \cite[Cor. 3]{smale2007learning} gives the approximation error of KRR on RKHS, which we only specialize in our context. 
\begin{lemma}\label{lem:Smale-Zhou}
    Suppose that $\{(x_i,u_i)\}_{i=1}^n\in \mbb{X}\times \mbb{U}$ are drawn from the uniform distribution. Then there exists a constant $c>0$, such that for all $j\in \mbb{I}_m$ and $\ell\in \mbb{I}_2$, we have
    $$ \|q_j^{(\ell)} - \hat{q}_j^{(\ell)}\|_{\mc{H}_{\rg\vk}} \leq cn^{-1/3}\log(2/\delta)$$
    with probability $1-\delta$ over random sampling.  
\end{lemma}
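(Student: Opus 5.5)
This lemma is a specialization of Smale \& Zhou \cite[Cor.~3]{smale2007learning}. The plan is to verify that each regression problem fits that corollary's hypotheses and then combine the resulting bounds by a union bound. Fix $j\in\mbb{I}_m$ and $\ell\in\mbb{I}_2$. Consider the least-squares regression problem on $Z=\mbb{X}\times\mbb{U}$ with marginal $\mu$ equal to the uniform distribution. The target regression function is $q_j^{(\ell)}$ itself, since the responses are the noiseless values $q_j^{(\ell)}(x_k,u_k)$. The hypothesis space is $\mc{H}_{\rg\vk}(\mbb{X}\times\mbb{U})$, and the estimator is the KRR solution \eqref{eq:KRR}. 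The hypotheses of Smale--Zhou are then checked as follows.
\begin{enumerate}[(a)]
    \item The kernel $\rg\vk$ is continuous and bounded on the compact set $\overline{\mbb{X}\times\mbb{U}}$. Indeed, $\sup\rg\vk\le \rho(0)\sup|(x,u)|^2<\infty$ by boundedness of $\mbb{X}$ and $\mbb{U}$ (Assumption \ref{assum:regular}) together with \eqref{eq:canonical.feature.norms}.
    \item The responses are bounded almost surely. This holds because $|q_j^{(\ell)}(x,u)|\le\|q_j^{(\ell)}\|\,\|\rg\vf_{(x,u)}\|$, which is uniformly bounded.
    \item The target lies in the range of a fractional power of the integral operator $L_{\rg\vk}$ on $L^2_\mu$. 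The $n^{-1/3}$ rate in the $\mc{H}$-norm corresponds to the regularity condition $q_j^{(\ell)}\in L_{\rg\vk}^{r}(L^2_\mu)$ with $r=1$, i.e., $q_j^{(\ell)}=L_{\rg\vk}g$ for some $g\in L^2_\mu$.
\end{enumerate}

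Step (c) is the main obstacle. Assumption \ref{assum:decomposition} only gives $q_j^{(\ell)}\in\mc{H}_{\rg\vk}=L_{\rg\vk}^{1/2}(L^2_\mu)$, which would yield convergence in $L^2_\mu$ but not in the RKHS norm. I would first try to extract the extra smoothness from the Sobolev structure. Under Lemma \ref{lem:Koopman.RKHS}, $L_{\rg\vk}$ maps $L^2$ into functions of roughly twice the Sobolev order. So it suffices that the $q_j^{(\ell)}$ have $H^{2s}$-type regularity, and I would state this as the implicit requirement absorbed into the constant $c$. If this route fails, the fallback is to take the source condition as part of the lemma's hypotheses via the citation, as the paper does when it says the result is ``specialized''.

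Given (a)--(c), the corollary applies with regularization parameter $\beta$ chosen as $\beta\asymp n^{-1/3}$. Note that the cited result normalizes the empirical risk by $1/n$, so $\beta$ in \eqref{eq:KRR} must be scaled accordingly. The corollary then gives
$$\|q_j^{(\ell)}-\hat q_j^{(\ell)}\|_{\mc{H}_{\rg\vk}}\le c_{j\ell}\, n^{-1/3}\log(2/\delta)$$
with probability at least $1-\delta$. The constant $c_{j\ell}$ depends on $\sup\rg\vk$, $\|g\|_{L^2_\mu}$, and the bound on the responses.

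To obtain the statement simultaneously for all $j,\ell$, apply the bound to each of the $2m$ problems at confidence level $\delta/(2m)$ and take a union bound. Because $m<\infty$, this costs only $\log(4m/\delta)\le c'\log(2/\delta)$ for $\delta\in(0,1)$, with $c'$ depending on $m$. Setting $c=c'\max_{j,\ell}c_{j\ell}$ then gives the claimed uniform bound.
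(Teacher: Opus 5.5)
Your route (verify the Smale--Zhou hypotheses, rescale the ridge parameter $\beta=n\lambda$, and take a union bound over the $2m$ regressions) is the paper's route; the paper's proof is only the citation. Items (a), (b) and the union bound are fine. The problems are in (c) and in the rate you then read off.

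First, the source condition is genuinely missing. Assumption~\ref{assum:decomposition} gives only $q_j^{(\ell)}\in\mc{H}_{\rg\vk}$, i.e.\ $r=1/2$, which does give $\mc{H}$-norm consistency, just no rate. Your $H^{2s}$ argument runs the inclusion backwards. That $L_{\rg\vk}$ maps $L^2_\mu$ into smoother functions shows $\mathrm{Ran}\,L_{\rg\vk}$ lies in a smoother class. It does not show that smooth functions lie in $\mathrm{Ran}\,L_{\rg\vk}$; on a bounded domain, range membership also imposes boundary compatibility conditions. Nor can a qualitative hypothesis be ``absorbed into $c$''. It must be added to the statement, e.g.\ $q_j^{(\ell)}=L_{\rg\vk}g_j^{(\ell)}$ with $g_j^{(\ell)}\in L^2_\mu$, and $c$ then depends on $\|g_j^{(\ell)}\|_{L^2_\mu}$. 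The paper's one-line citation leaves this implicit too.

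Second, even granting $r=1$, your choice $\lambda\asymp n^{-1/3}$ cannot give $n^{-1/3}$ in the RKHS norm. The regularization bias there is $\|q-q_\lambda\|_{\mc{H}_{\rg\vk}}=\|\lambda(L_{\rg\vk}+\lambda)^{-1}L_{\rg\vk}^{1/2}g\|_{L^2_\mu}\le\tfrac12\lambda^{1/2}\|g\|_{L^2_\mu}$. The order $\lambda^{1/2}$ cannot be improved uniformly over $\{\|g\|_{L^2_\mu}\le R\}$ (take $g$ an eigenfunction with eigenvalue comparable to $\lambda$), so the bias alone is of order $n^{-1/6}$. Adding the generic sample error $O(\log(2/\delta)/(\sqrt{n}\lambda))$ and the saturation at $r=3/2$, Smale--Zhou-type estimates give at best $n^{-(2r-1)/(4r+2)}\le n^{-1/4}$ in $\mc{H}$. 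The combination $r=1$, $\lambda\asymp n^{-1/3}$, rate $n^{-1/3}$ is the $L^2_\mu$-norm result $n^{-r/(2r+1)}$, so your final display transfers an $L^2_\mu$ rate to the RKHS norm.

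What rescues the claim is that the responses are noiseless. Write $z=(x,u)$, $T=\mathbb{E}[\rg\vf_z\times\rg\vf_z]$, and $T_n$ for its empirical counterpart. Then $\hat q-q_\lambda=(T_n+\lambda)^{-1}(T_n-T)(q-q_\lambda)$. Apply a Bernstein-type inequality for Hilbert-space-valued variables to $\xi_i=(q-q_\lambda)(z_i)\rg\vf_{z_i}$; under $r=1$ these satisfy $\|\xi_i\|=O(\lambda^{1/2})$ and $\mathbb{E}\|\xi_i\|^2=O(\lambda^2)$. This gives $\|\hat q-q_\lambda\|_{\mc{H}_{\rg\vk}}=O\bigl(\lambda^{-1/2}n^{-1}\log(2/\delta)+n^{-1/2}\sqrt{\log(2/\delta)}\bigr)$. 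Taking $\lambda=\beta/n$ with $\beta$ fixed in \eqref{eq:KRR} yields $\|q-\hat q\|_{\mc{H}_{\rg\vk}}=O(n^{-1/2}\log(2/\delta))$. That implies the lemma under the added source condition, and your union bound then goes through unchanged.
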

\begin{corollary}\label{cor:W.scaled.statistical}
    Under the condition of Lemma \ref{lem:Smale-Zhou} and Assumption \ref{assum:decomposition}, there exists a constant $c>0$ such that 
    $$\|S-\hat{S}\|_{B_2} \leq cn^{-1/3}\log (2m/\delta) $$
    with probability $1-\delta$ over random sampling. 
\end{corollary}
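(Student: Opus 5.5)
The plan is to reduce the operator error to the function errors of Lemma~\ref{lem:Smale-Zhou} using the bilinear structure of $S$ and $\hat S$. Then a union bound over the $2m$ regressions gives the stated probability.

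\emph{Step 1: algebraic decomposition.} For each $j$, write
\[
q_j^{(0)}\times q_j^{(1)} - \hat q_j^{(0)}\times \hat q_j^{(1)} = (q_j^{(0)}-\hat q_j^{(0)})\times q_j^{(1)} + \hat q_j^{(0)}\times (q_j^{(1)}-\hat q_j^{(1)}),
\]
and similarly for the transposed term. A rank-one operator satisfies $\|g_1\times g_2\|_{B_2} = \|g_1\|\,\|g_2\|$. With the triangle inequality in $B_2$ this gives
\[
\|S-\hat S\|_{B_2} \le \sum_{j=1}^m \Big( \|q_j^{(0)}-\hat q_j^{(0)}\|\,\|q_j^{(1)}\| + \|\hat q_j^{(0)}\|\,\|q_j^{(1)}-\hat q_j^{(1)}\| \Big),
\]
where all norms are in $\mc{H}_{\rg\vk}$. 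The symmetrization factor $\tfrac12$ cancels against the two symmetric terms.

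\emph{Step 2: probabilistic control.} Apply Lemma~\ref{lem:Smale-Zhou} to each of the $2m$ functions $q_j^{(\ell)}$, each at confidence level $\delta/(2m)$. A union bound shows that, with probability at least $1-\delta$, all of them simultaneously satisfy
\[
\|q_j^{(\ell)}-\hat q_j^{(\ell)}\| \le c\,n^{-1/3}\log(4m/\delta).
\]
Since $\log(4m/\delta)\le 2\log(2m/\delta)$ whenever $2m/\delta\ge 2$, this extra factor can be absorbed into the constant.

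\emph{Step 3: bounding the remaining norms.} On the same event,
\[
\|\hat q_j^{(0)}\| \le \|q_j^{(0)}\| + c\,n^{-1/3}\log(2m/\delta).
\]
The norms $\|q_j^{(\ell)}\|$ are fixed finite constants by Assumption~\ref{assum:decomposition}. Substituting into Step 1 gives a bound of the form
\[
\|S-\hat S\|_{B_2} \le C_1\,n^{-1/3}\log(2m/\delta) + C_2\,n^{-2/3}\log^2(2m/\delta).
\]

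\emph{Main obstacle.} The delicate point is absorbing the quadratic cross term into a single constant $c$. When $n^{-1/3}\log(2m/\delta)\le 1$, the quadratic term is dominated by the linear one and the claim follows directly. In the opposite regime, I would use the KRR stability bound $\|\hat q_j^{(\ell)}\| \le \|q_j^{(\ell)}\|$. This is the standard bound for the regularized minimizer in the noiseless case: compare the KRR objective at $\hat q$ with its value at the feasible candidate $q$ and read off the norm term. With it, the coefficient of $\|q_j^{(1)}-\hat q_j^{(1)}\|$ in Step 1 is already a fixed constant, and no quadratic term appears. I would therefore use this stability bound first, since it removes the regime distinction entirely.
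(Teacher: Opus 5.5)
Your proposal is correct. The paper gives no written proof and treats this corollary as immediate from Lemma~\ref{lem:Smale-Zhou}; your argument is the reasoning it leaves implicit: split each rank-one term bilinearly, apply a union bound over the $2m$ kernel ridge regressions, and use the noiseless bound $\|\hat q_j^{(\ell)}\|\le\|q_j^{(\ell)}\|$ (justified by the representer theorem) so that the estimate stays linear in the regression error.
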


\par Combining Corollaries \ref{cor:W.scaled} and \ref{cor:W.scaled.statistical}, with probability $1-\delta$, we have $w(x_i,u_i)/|(x_i,u_i)|^2 \leq cn^{-1/3}\log (2m/\delta)$ for all $i\in \mbb{I}_n$ with $|(x_i,u_i)|\neq 0$, where $c>0$ is a constant.

\subsection{Generalization Error Bound}
\par At the end, we should generalize the scaled violation of dissipativity to all points $(x,u)\in \mbb{X}\times \mbb{U}$. 
With $W$ being a self-adjoint Hilbert--Schmidt operator, the kernel quadratic form that it specifies, if scaled by $|(x,u)|$ near the origin, or $|(x,u)|^2$ away from the origin, should be a function with nice continuity property. Hence, when the data sample is rich, the bound in Corollary \ref{cor:W.scaled} can be extended from the sample points onto the whole region ``without much inflation''. 
\begin{lemma}
    Suppose that Assumption \ref{assum:regular} hold and that $W$ is a self-adjoint and Hilbert--Schmidt operator on $\mc{H}_{\rg\vk}(\mbb{X}\times \mbb{U})$. Let $\overline{w}_1(x,u) = w(x,u)/|(x,u)|$ and $\overline{w}_2(x,u) = w(x,u)/|(x,u)|^2$, and $r>0$ small enough such that $\mbb{B}_r :=\{(x, u)\in \mR^{d_x+d_u}: |(x,u)|< r \} \subset \mbb{X}\times\mbb{U}$. 
    Then $\overline{w}_1\in C^{\alpha}(\mbb{B}_r, \mR)$, the class of continuous functions with H{\"{o}}lder exponent $\alpha$, with $\alpha= 1\wedge (s-(d_x+d_u)/2)$, while $\overline{w}_2\in C^\alpha((\mbb{X}\times\mbb{U}) \backslash \mbb{B}_r, \mR)$. 
    The modulus of H{\"{o}}lder continuity can be chosen as $\propto r^{-1}$. 
\end{lemma}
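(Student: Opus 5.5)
Write $z=(x,u)$ and $n=d_x+d_u$. The plan is to reduce everything to the regularity of the radial part of the kernel and of its RKHS.

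\textbf{Step 1 (structure of the canonical feature).} By Lemma \ref{lem:linear-radial} and the definition \eqref{eq:linear--radial}, $\rg\vf_z=\sum_{k=1}^{n} z_k\,(e_k\vf_z)$, where $e_k\vf_z:z'\mapsto z'_k\vk(z,z')$. Each $e_k\vf_z$ lies in $\mc{H}_{\rg\vk}$, and $z\mapsto e_k\vf_z$ inherits the regularity of $z\mapsto \vf_z$ in $\mc{H}_\vk=H^s$.

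\textbf{Step 2 (H\"older continuity of the feature map).} For $g\in\mc{H}_{\rg\vk}$ write $g=\sum_k e_kg_k$ with $g_k\in H^s(\mbb{X}\times\mbb{U})$ and $\sum_k\|g_k\|_{H^s}\lesssim\|g\|$. The second equivalence is the content of Lemma \ref{lem:linear-radial}. Then
$$\ip{g}{\rg\vf_z}=g(z)=\sum_k z_k g_k(z).$$
Set $\psi_z:=\rg\vf_z$ and define the ``normalized'' feature $\chi_z$ by $\ip{g}{\chi_z}=\sum_k (z_k/|z|)g_k(z)$, so that $\rg\vf_z=|z|\chi_z$. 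Sobolev embedding gives $H^s\hookrightarrow C^{0,\alpha}$ with $\alpha=1\wedge(s-n/2)$; if $s-n/2$ is an integer, one takes $\alpha$ slightly smaller. Consequently $\|\vf_z-\vf_{z'}\|_{\mc{H}_\vk}\lesssim|z-z'|^\alpha$, because $\|\vf_z-\vf_{z'}\|^2=2\rho(0)-2\rho(|z-z'|)$ and $\hat\rho$ has the stated decay. The direction map $z\mapsto z/|z|$ is Lipschitz with constant $\lesssim r^{-1}$ on $\{|z|\ge r\}$. Combining the two, $z\mapsto\chi_z$ is bounded and $\alpha$-H\"older, with modulus $\propto r^{-1}$ away from $\mbb{B}_r$.

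\textbf{Step 3 (the two cases).} Away from the origin, $\overline w_2(z)=\ip{\chi_z}{W\chi_z}$. Then
$$|\overline w_2(z)-\overline w_2(z')|\le 2\|W\|_{B_2}\,\sup\|\chi\|\,\|\chi_z-\chi_{z'}\|,$$
which gives the $C^\alpha$ claim with modulus $\propto r^{-1}$. Inside $\mbb{B}_r$, $\overline w_1(z)=|z|\,\ip{\chi_z}{W\chi_z}$. Here I will avoid the singular direction map and use instead $\overline w_1(z)=\ip{\rg\vf_z}{W\chi_z}$. The bilinear form is bounded, $z\mapsto\rg\vf_z$ is $\alpha$-H\"older on the bounded ball (Lipschitz factor $z_k$ times H\"older $\vf_z$), and $\|\rg\vf_z\|\lesssim|z|$. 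The only issue is therefore the factor $\chi_z$.

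\textbf{Main obstacle.} The hard step is the behaviour of $\chi_z$ at $z=0$, where $z/|z|$ is discontinuous. I will exploit the smallness of $\rg\vf_z$ there. For $|z|\le|z'|$, split
$$\overline w_1(z)-\overline w_1(z')=\ip{\rg\vf_z-\rg\vf_{z'}}{W\chi_z}+\ip{\rg\vf_{z'}}{W(\chi_z-\chi_{z'})}.$$
The first term is $O(|z-z'|^\alpha)$. In the second term, $\|\chi_z-\chi_{z'}\|\lesssim |z-z'|/|z'|+|z-z'|^\alpha$ while $\|\rg\vf_{z'}\|\lesssim|z'|$, so the product is $\lesssim|z-z'|+|z'|\,|z-z'|^\alpha$. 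On the ball this is $\lesssim|z-z'|^\alpha$ for $\alpha\le1$. The constants involve $\|W\|_{B_2}$, $\rho(0)$ and the Sobolev embedding constant, and the $r^{-1}$ dependence comes only from the exterior estimate.
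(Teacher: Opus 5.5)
Your proof is correct, but it takes a different route from the paper's. The paper works with functions: it expands $w$ through the spectral decomposition of $W$ as $\sum_i\bigl(\sum_k z_k\omega_{ik}(z)\bigr)^2$ with $\omega_{ik}\in H^s$. It shows that each term divided by $|z|$ is $C^\alpha$ near the origin (Lipschitz $z_kz_l/|z|$ times H\"older $\omega_{ik}\omega_{il}$) and sums over $i$. Away from the origin it multiplies by the smooth factor $|z|^{-2}$ and invokes $H^s\hookrightarrow C^\alpha$. You instead prove once that the feature maps $z\mapsto\vf_z$, $z\mapsto\rg\vf_z$ and $z\mapsto\chi_z=\rg\vf_z/|z|$ are H\"older in the RKHS, and after that you use only $\|W\|\le\|W\|_{B_2}$. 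Near the origin you pair the singular difference $\chi_z-\chi_{z'}$ with $\rg\vf_{z'}$ for the larger of $|z|,|z'|$, so that $\|\rg\vf_{z'}\|\propto|z'|$ cancels the $1/|z'|$; this replaces the paper's Lipschitz bound on $z_kz_l/|z|$. Your version buys generality and transparency. It works for any bounded self-adjoint $W$, whereas the paper's termwise summation of H\"older constants tacitly needs $\sum_i|\lambda_i|<\infty$ (trace class, which is harmless in context since $W$ is finite rank there) and ignores the signs of the eigenvalues of an indefinite $W$. Your $2/r$ Lipschitz constant for $z/|z|$ also makes the claimed $r^{-1}$ modulus explicit, which the paper's appeal to the Lipschitz constant of $|z|^{-2}$ (of order $r^{-3}$) does not by itself deliver. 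The paper's route, in exchange, uses only elementary function-space facts and asserts the stronger $\overline w_2\in H^s$. You assert without proof that $z\mapsto e_k\vf_z$ is H\"older in $\mc{H}_{\rg\vk}$, which needs multiplication by $e_k$ to be bounded from $\mc{H}_\vk$ into $\mc{H}_{\rg\vk}$. You can bypass this with the kernel identity $\|\chi_z-\chi_{z'}\|^2=\rho(0)\bigl|z/|z|-z'/|z'|\bigr|^2+2\frac{z^\top z'}{|z||z'|}\bigl(\rho(0)-\rho(|z-z'|)\bigr)$. Finally, setting $\overline w_1(0):=0$ is justified by $|\overline w_1(z)|\le\|W\|\rho(0)|z|$.
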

\begin{proof}
    Due to the Sobolev embedding theorem, $H^s \hookrightarrow C^\alpha$. We claim that any function of the form 
    $\textstyle \omega(z)= \frac{1}{|z|} \left( \sum_{k=1}^d z_k\omega_k(z) \right)^2, $
    where all $\omega_1, \cdots, \omega_k\in H^s(\mbb{X}\times \mbb{U})$, is $C^\alpha$ at the origin. 
    This can be shown by the Lipschitz continuity of the $z_kz_l/|z|$ term and the H{\"{o}}lder continuity of the product term $\omega_k\omega_l$. 
    \par Now $W$ is Hilbert--Schmidt, $w(x,u) = \ip{\rg\vf_{(x,u)}}{W\rg\vf_{(x,u)}}$ can be written as $\sum_{i=1}^\infty \left( \sum_{k=1}^{d_x+d_u} z_k\omega_{ik}(z) \right)^2$, with the squared $H^s$-norms of $\omega_{ik}$ summable. Hence, $\overline{w}_1\in C^\alpha$ in $\mbb{B}_r$. 
    Since $\mbb{X}$ and $\mbb{U}$ are bounded, outside of $\mbb{B}_r$, $1/|z|^2$ is Lipschitz continuous. By elementary calculation similar to the proof for $\overline{w}_1$, it can be shown that $\overline{w}_2$ remains in $H^s$, and hence $C^\alpha$. 
\end{proof}

\begin{theorem}[Generalized error of dissipativity violation]
\label{th:generalization}
    Suppose that Assumptions \ref{assum:regular} and \ref{assum:decomposition} hold. Then $\exists c>0$, such that at large $n$, with probability $1-\delta$, we have
    \begin{equation}\textstyle
        \frac{w(x,u)}{|(x,u)|^2 \vee |(x,u)|} \leq c\bra{\bra{\frac{\log(n/\delta)}{n}}^{\alpha^*} + \frac{1}{n^{1/3}} \log\frac{4m}{\delta}} 
    \end{equation}
    for all $(x,u)\in \mbb{X}\times\mbb{U}$, where $\alpha^* = \frac{1}{d_x+d_u} \wedge (\frac{s}{d_x+d_u}-\frac{1}{2})$. 
\end{theorem}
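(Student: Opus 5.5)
The bound is assembled from three pieces: the on-sample bound of Corollary~\ref{cor:W.scaled} in its quantitative form, the statistical bound of Corollary~\ref{cor:W.scaled.statistical}, and the H\"older regularity of the scaled violations $\overline{w}_1,\overline{w}_2$ from the preceding lemma. The last is used to transfer the sample bound to every point through a fill-distance argument. The $\log(4m/\delta)$ term suggests splitting the failure probability as $\delta/2$ for the regression step and $\delta/2$ for the covering step, then taking a union bound.

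\textbf{Step 1 (sample points).} Apply Corollary~\ref{cor:W.scaled.statistical} with confidence $\delta/2$. This gives $\|S-\hat S\|_{B_2}\le c n^{-1/3}\log(4m/\delta)$. By Corollary~\ref{cor:W.scaled}, every sample point with $(x_i,u_i)\neq 0$ then satisfies $\overline{w}_2(x_i,u_i)\le c\rho(0) n^{-1/3}\log(4m/\delta)$. Since $\mbb{X}\times\mbb{U}$ is bounded, $|z|\le R$ and $|z|^2\le R|z|$, so the same bound holds for $\overline{w}_1$ up to a constant. In every case the on-sample bound therefore holds for $w/(|z|^2\vee|z|)$ up to constants.

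\textbf{Step 2 (fill distance).} Because the samples are uniform on a bounded Lipschitz domain, a standard covering argument applies: partition into $N\asymp n/\log(n/\delta)$ cells of diameter $h\asymp(\log(n/\delta)/n)^{1/(d_x+d_u)}$ and union-bound the event that some cell is empty. With probability at least $1-\delta/2$, every $z\in\mbb{X}\times\mbb{U}$ then has a sample $z_i$ with $|z-z_i|\le h$. This is where the ``large $n$'' proviso enters: it guarantees $h<r$ and that near any point the cell containing a sample sits in the same region ($\mbb{B}_r$ or its complement), possibly after enlarging $r$ slightly so that neighborhoods straddling $\partial\mbb{B}_r$ can be handled by either scaling.

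\textbf{Step 3 (transfer).} Fix $z$ and its nearest sample $z_i$. In $\mbb{B}_r$, the H\"older continuity of $\overline{w}_1$ gives $\overline{w}_1(z)\le\overline{w}_1(z_i)+L h^{\alpha}$. Outside $\mbb{B}_r$, the same holds for $\overline{w}_2$. Here $L$ is the H\"older modulus, controlled by $\|W\|_{B_2}/r$. Substituting $h$ gives $h^\alpha=(\log(n/\delta)/n)^{\alpha/(d_x+d_u)}$, and with $\alpha=1\wedge(s-(d_x+d_u)/2)$ the exponent is exactly $\alpha^*$. Combining with Step 1 and converting between $\overline{w}_1,\overline{w}_2$ and $w/(|z|^2\vee|z|)$ on the bounded domain, where $|z|\le R$ and $|z|\ge r$ off $\mbb{B}_r$, yields the claim with probability at least $1-\delta$.

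\textbf{Main obstacle.} The hard step is making $L$ a genuine constant independent of $n$. $W=K\hat PK^*-E\hat PE^*-S$ depends on the data through $\hat P$, so $\|W\|_{B_2}$ must be bounded uniformly. I would first try to secure this by adding a norm bound or penalty $\|\hat P\|_{B_2}\le \bar p$ in the LMI \eqref{eq:dissipativity.matrix}; since $K,E$ are bounded by Assumption~\ref{assum:regular}, this gives $\|W\|_{B_2}\le(\|K\|^2+\|E\|^2)\bar p+\|S\|_{B_2}$. Failing that, I would argue that a minimum-norm feasible $\hat P$ stays bounded once the true system admits a Hilbert--Schmidt solution, using the projection argument in the theorem on the linear operator inequality.
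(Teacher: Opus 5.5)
Your proposal follows the paper's proof essentially step for step: a fill-distance bound at confidence $1-\delta/2$, H\"older continuity of $\overline{w}_1$ near the origin and of $\overline{w}_2$ away from it to transfer the on-sample bound of Corollaries~\ref{cor:W.scaled} and~\ref{cor:W.scaled.statistical} (taken at level $\delta/2$, whence $\log(4m/\delta)$), and a union bound. Your exponent bookkeeping $h^{\alpha}=(\log(n/\delta)/n)^{\alpha/(d_x+d_u)}$ is the consistent version of what the paper writes as $\eta\leq c_\eta(\log(n/\delta)/n)^{\alpha}$. The obstacle you flag is real and is not resolved in the paper, which simply asserts that $c_1,c_2$ are constants even though $W$ depends on the data through $\hat P$. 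Your constraint $\|\hat P\|_{B_2}\leq\bar p$ closes it, but it is an extra hypothesis beyond Assumptions~\ref{assum:regular} and~\ref{assum:decomposition}. Your fallback minimum-norm argument would additionally need the relaxed constraint \eqref{eq:dissipativity.matrix.relax}, since with $\hat S\neq S$ the true storage need not satisfy \eqref{eq:dissipativity.matrix}.
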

\begin{proof}
    Denote $\eta = \sup_{(x,u)\in \mbb{X}\times \mbb{U}} \min_{i\in \mbb{I}_n} |(x,u)-(x_i,u_i)|$ as the fill distance of the sample. 
    Let $r>\eta$ be such that $\mbb{B}_{r+\eta}\subset \mbb{X}\times \mbb{U}$. For any $(x,u)$, if $|(x,u)|<r$ then there exists a sample point $(x_i,u_i)\in \mbb{B}_{r+\eta}$ with distance not exceeding $\eta$, and hence due to the H{\"{o}}lder continuity of $\overline{w}_1$, $w(x, u)/|(x,u)|\leq \epsilon_S\rho(0)+c_1\eta^\alpha$; if $|(x,u)|\geq r$ then there exists a $(x_i,u_i)$ with $|(x_i,u_i)|\geq r-\eta$, and hence due to the H{\"{o}}lder continuity of $\overline{w}_1$, $w(x, u)/|(x,u)|^2 \leq \epsilon_S\rho(0)+c_2\eta^\alpha/(r-\eta)$. Here $c_1$ and $c_2$ are constants. In both cases, the right-hand-side error bound is bounded by a linear combination of $\eta^\alpha$ and $\epsilon_S$. 
    Since $\eta \leq c_\eta(\log (n/\delta)/n)^\alpha$ with probability $1-\delta/2$ for some $c_\eta>0$ \cite{vershynin2018high}, and $\epsilon_S$ was given in Lemma \ref{cor:W.scaled.statistical}, we obtain the desired conclusion. 
\end{proof}

\section{NUMERICAL EXPERIMENTS}\label{sec:numerical}
To validate the proposed data-driven dissipativity estimation, we evaluate its performance with three case studies. 
% We compare the learned storage functions with the quadratic storage based on local linear model to show its capability to handle nonlinear systems, and remark that the standard radial kernel is unable to give a meaningful storage function such that $v(0)=0$. 
% In all examples, the hyperparameters on the kernel functions and KRR regularization parameters in \eqref{eq:KRR} are tuned by cross-validation using the mean-squared error metric. 
In all experiments, the storage function is solved under constraint \eqref{eq:dissipativity.matrix} with a minimization objective of $\mr{tr}(G_{xx}^\top P G_{xx})$ (sum of the storage values at sample points), using \texttt{cvxpy}.

\subsection{Case 1: Polynomial System with Analytical Storage}
To show the precision of the proposed method, we consider a synthetic polynomial system:
$$\textstyle x_1^+ = \frac{1}{2}\left( x_2 + x_1^2 \right), 
\enspace x_2^+ = u - \frac{1}{4} \left( x_2 + x_1^2 \right)^2, 
\enspace y = x_1$$
and aim to find a storage function that verifies the rate $s(y,u) = \frac{1}{4}u^2 - y^2$ (which implies $\ell^2$-gain). A ground-truth storage function that fits this given $s$, and actually satisfies exactly $v(x^+)-v(x)=s(y,u)$, is the quartic polynomial:
$$\textstyle v^*(x) = x_1^2 + \frac{1}{4}(x_2 + x_1^2)^2.$$ 
We sample $8$ trajectories, each with a length of $10$ time steps, initial state randomly located in $\mathcal{X} = [-2, 2] \times [-2, 2]$ and inputs randomly sampled from $u_t \in [-1, 1]$. We examine (i) whether $v_\ast$ can be learned precisely from data under the specified $s$ with the proposed approach, (ii) if $v(x)$ is restricted to be a simple quadratic form $v(x)=x^\top Mx$, whether finite $\ell^2$-gain property can still be verified from data, and (iii) if we assume an ``oracle lifting'' $\psi(x) = (x_1, x_2 + x_1^2)$ and restrict $v(x) = \psi(x)^\top M\psi(x)$, whether the estimated $v$ appear close to the ground truth. 

\par The ground-truth $v^\ast$ and the estimates from three experiments in order are shown in Fig.~\ref{fig:case1_storage}. 
It can be claimed that our linear--radial kernel-based approach precisely learned the storage function (depicted as $\hat v_1(x)$), close to the case as if we had an ``oracle'' that $v$ is a quadratic form of $\psi(x)=(x_1, x_2+x_1^2)$ (depicted as $\hat{v}_3(x)$). 
In contrast, the storage function restricted to a simple quadratic form ($\hat{v}_2(x)$) has a completely different shape. 
The foregoing results are for the case $\beta = 1/4$. 
In addition, numerical simulations verify there is no feasible storage function if we set $\beta < 1/4$, which is consistent with the model.  
 
\begin{figure}[!t]
    \centering
        \includegraphics[width=\columnwidth]{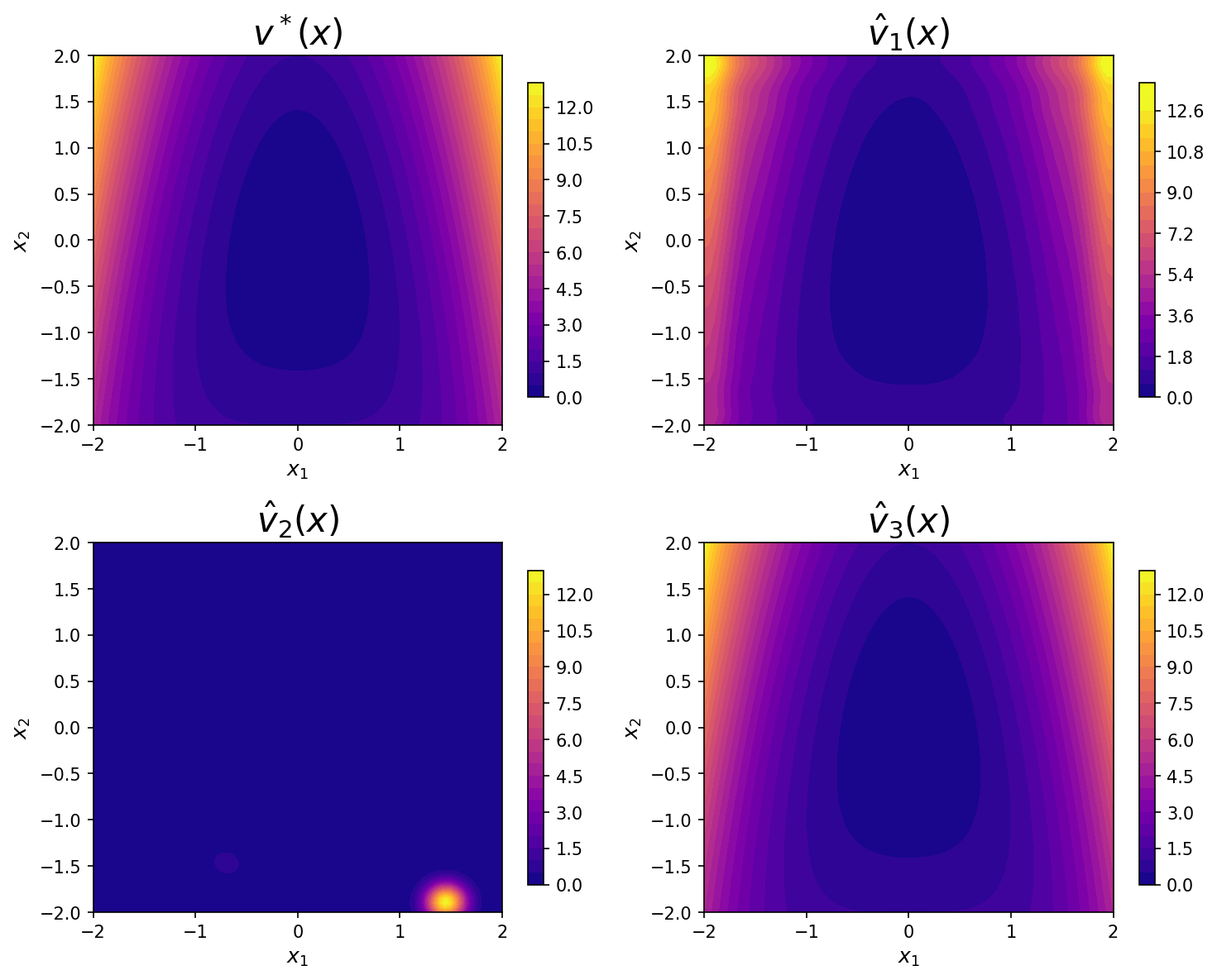}
    \caption{Storage function estimates for Case 1. }
    \label{fig:case1_storage}
    \vspace{-1.5em}
\end{figure}

\subsection{Case 2: Inverted Pendulum}
Consider an inverted pendulum with continuous-time dynamics, discretized by a Runge--Kutta 4th-order method with sampling interval $\Delta t = 0.05$: 
$$\dot{x}_1 = x_2, \enspace\dot{x}_2 = \sin x_1 - \gamma(x_2) + u, \enspace y_1 = x_1, \enspace y_2=x_2, $$
where $x_1$ is the angle, $x_2$ is the angular velocity, and $u$ is the external torque. 
The friction $\gamma(x_2)$ is assumed to be unknown, but of sectorial nonlinearity: $x_2\gamma(x_2) \geq ax_2^2$ for some $a>0$. 
We consider supply rate of the type $s(x,u) = \beta \sin^2 y_1 + y_2u$, and aim to find the lowest $\beta$ under which there still exists a feasible storage function dissipative with respect to $s$. 
When choosing this form of $s$, we view $y_2u$ as the power provided by the external torque and $\sin^2 y_1$ as a term associated with the instability of the system. 
To understand the theoretical lower bound on $\beta$, consider a candidate storage $v(x) = 1 - \cos x_1 + \frac{1}{2}x_2^2$, for which
$$\begin{aligned}
    \dot{v}(x) &= 2x_2\sin x_1 - x_2\gamma(x_2) + x_2u \\
    & \leq 2x_2\sin x_1 -ax_2^2 + x_2u  \leq (1/a)\sin^2 x_1 + x_2u. 
\end{aligned}$$
Hence, all $\beta\geq 1/a$ are valid.
When generating the simulation data, we use $\gamma(x_2)=x_2(1+x_2\tanh x_2)/6$, which verifies $a=1/6$. 

\begin{figure}[!t]
    \centering
        \centering
        \includegraphics[width=\columnwidth]{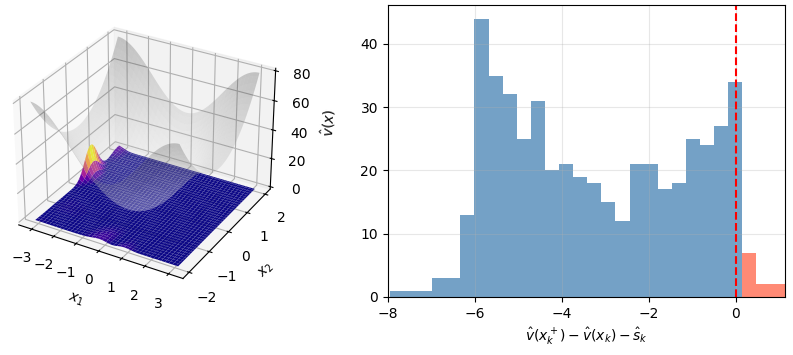}
        \caption{Storage function estimate (left) and dissipativity satisfaction check (right) for Case 2 under $\beta = 6$.}
    \label{fig:case2_storage}
    \vspace{-1.5em}
\end{figure}

\par We sample $75$ snapshots, with the state uniformly sampled from $\mathcal{X} = [-\pi, \pi] \times [-2, 2]$ and the input $u \in [-0.5, 0.5]$. Setting $\beta=6$, dissipativity learning returns a feasible storage function. 
The evaluation of the dissipative inequality on $500$ new snapshots sampled from the same distribution validates the result, with few violations ($4.2\%$). 
The learned $v$ profile, as plotted in Fig.~\ref{fig:case2_storage}, however, is very different from the ``physically intuitive'' storage $v(x) = 1-\cos x_1 + \frac{1}{2}x_2^2$ (plotted in the gray shadow), which is the mechanical energy of the ``non-inverted pendulum'' (scaled by $\Delta t$). 
This shows that the learned storage function is \textit{drastically less conservative}.

\subsection{Case 3: Bioreactor}
Finally, we consider a bioreactor \cite[Problem~II.23]{romagnoli2020introduction}:
$$\textstyle \dot{x}_1= (\mu(x_2) -1-u)x_1, 
\text{ where } \mu(x_2)=\frac{1+x_2}{1+0.7x_2+0.05x_2^2},$$
$$\textstyle \dot{x}_2 = (1+u)(20-x_2) - 20(1+x_1)\mu(x_2), \enspace
y=x_2.$$
The two states represent scaled deviations from steady-state biomass and substrate concentrations, and the $\mu$ function represents the biomass growth rate. The input $u$ is a dilution rate. Here, physical intuition is not enough to readily give us a guess on the storage or supply rate function. 
% We impose structural assumptions to obtain a tractable finite-dimensional optimization problem. Specifically, the storage function is restricted to a quadratic form $v(x) = x^\top M x$ and exclude the trivial solution $M = 0$. 
We investigate whether the system is $(Q,S,R)$-dissipative with respect to any supply rate in the form of $s(y, u) = qy^2 + yu$ for some $q<0$, and we seek to find a choice of $q$ so that the violation of dissipativity, $w(x, u)$, is empirically small.

\par In the case study, we sample $100$ snapshots over $\mbb{X} = [-0.5, 0.5] \times [-0.5, 0.5]$ and $\mbb{U} = [-0.25, 0.25]$. A feasible storage function found and the dissipativity validation results are shown in Fig.~\ref{fig:fig3}, for $q=2.0$. 
We discover an empirical storage function with nontrivial profile. The violation to the dissipative inequality $w(x,u)$, which is positive on $5.6\%$ of the sample points as plotted in Fig.~\ref{fig:fig3}, is bounded apparently by $|(x,u)|^2$ away from the origin. This observation confirms the conclusion of Theorem \ref{th:generalization}.  
\begin{figure}[t]
    \centering
        \centering
        \includegraphics[width=\columnwidth]{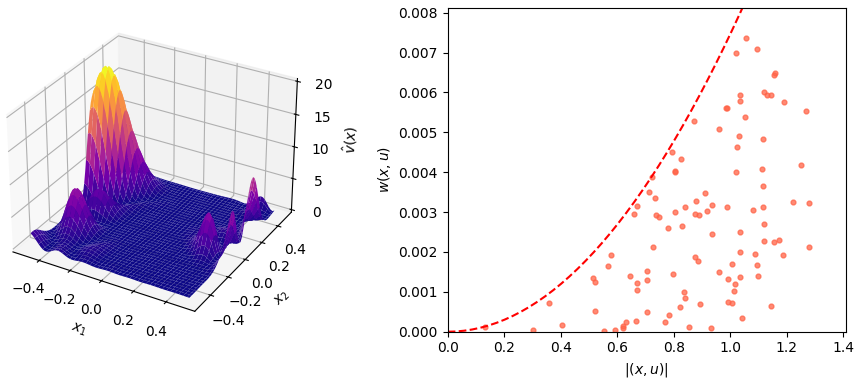}
       \caption{Storage function estimate and distribution of dissipativity violations for Case 3.}
    \label{fig:fig3}\vspace{-1.5em}
\end{figure}

%%%%%%%%%%%%%%%%%%%%%%%%%%%%%%%%%%%%%%%%%%%%%%%%%%%%%%%%%%%%%%%%%%%%%%%%%%%%%%%%
\section{CONCLUSIONS}\label{sec:conclusions}
\par A data-driven dissipativity estimation method for nonlinear systems, based on operator theory and approximated with data, is presented. 
Building on the tool of linear--radial kernels, the dissipative inequality is estimated under linear matrix inequality constraints, which ensures the dissipativity on the sample points in a bounded region and can be generalized to all points therein, with bounded violation of dissipativity (scaled by distance or squared distance to the equilibrium point) that vanishes at large data limit. 
The learning of dissipativity can be used in multiple potential ways, e.g., to characterize plant-model mismatch in robust control problems, to certify stability, and to design dissipativity-based controllers in a model-free approach. 

\par An existing limitation of this work is the dependence on state data, which may not be accessible when the states are not measured fully. 
This is in fact a common issue in Koopman operator-related works. Overcoming it would require data-driven state observer synthesis and even \emph{operator modeling directly from input--output data}. These important problems are worth examination by researchers.  

\bibliographystyle{ieeetr}
\bibliography{mybib}

\end{document}